
\documentclass{amsart}
\usepackage{amssymb}
\usepackage{verbatim}
\usepackage{mathrsfs}
\usepackage{dsfont}

\newtheorem{theorem}{Theorem}[section]
\newtheorem{lemma}[theorem]{Lemma}
\newtheorem{corollary}[theorem]{Corollary}
\newtheorem{proposition}[theorem]{Proposition}
\newtheorem{observation}[theorem]{Observation}
\theoremstyle{definition}
\newtheorem{definition}[theorem]{Definition}
\newtheorem{example}[theorem]{Example}

\theoremstyle{remark}
\newtheorem{remark}[theorem]{Remark}

\newcommand{\I}{{\mathds {1}}}

\newcommand{\cB}{{\mathcal B}}

\newcommand{\cE}{{\mathcal E}}

\newcommand{\cF}{{\mathcal F}}
\newcommand{\cH}{{\mathcal H}}

\newcommand{\cM}{{\mathcal M}}

\newcommand{\cP}{{\mathcal P}}

\newcommand{\Rn}{{\rm I\!R}} 
\newcommand{\Nn}{{\rm I\!N}} 
\newcommand{\Cn}{{\setbox0=\hbox{
$\displaystyle\rm C$}\hbox{\hbox
to0pt{\kern0.6\wd0\vrule height0.9\ht0\hss}\box0}}} 

\numberwithin{equation}{section}



\newcommand{\tT}{\widetilde{T}}
\newcommand{\tx}{\widetilde{x}}
\newcommand{\jed}{{\mathbb{I}}}

\setcounter{tocdepth}{1}
\begin{document}

\title{Quantum dynamics on Orlicz spaces}

\author{L. E. Labuschagne}

\address{DST-NRF CoE in Math. and Stat. Sci,\\ Unit for BMI,\\ Internal Box 209, School of Comp., Stat., $\&$ Math. Sci.\\
NWU, PVT. BAG X6001, 2520 Potchefstroom\\ South Africa}
\email{Louis.Labuschagne@nwu.ac.za}

\author{W. A. Majewski}

\address{Institute of Theoretical Physics and Astrophysics, The Gdansk University, Wita Stwosza 57,\\
Gdansk, 80-952, Poland and Unit for BMI, North-West-University, Potchefstroom, South Africa}
\email{fizwam@univ.gda.pl}

\date{\today}
\subjclass[2010]{46L55, 47L90 (Primary); 46L51, 46L52, 46E30, 81S99, 82C10 (Secondary)}
\thanks{The authors would like to thank Adam Skalski for his willingness to share unpublished work on non-KMS-symmetric Markov operators. The contribution of L. E. Labuschagne is based on research partially supported by the National Research Foundation (IPRR Grant 96128). Any opinion, findings and conclusions or recommendations expressed in this material, are those of the author, and therefore the NRF do not accept any liability in regard thereto. For W. A. Majewski the partial support of the Foundation for Polish Science TEAM project cofinanced by the EU European Regional Development Fund
is acknowledged.}

\begin{abstract}
Quantum dynamical maps are defined and studied for quantum statistical physics based on Orlicz spaces. This complements earlier work  \cite{ML1} where we made a strong case for the assertion that statistical physics of regular systems should properly be based on the pair of Orlicz spaces $\langle L^{\cosh - 1}, L\log(L+1)\rangle$. The present paper therefore in some sense ``completes'' the picture by showing that even in the most general non-commutative contexts, completely positive Markov maps satisfying a natural Detailed Balance condition, canonically admit an action on a large class of quantum Orlicz spaces. This is achieved by the development of a new interpolation technique, specifically suited to the above context, for extending the action of such maps to the appropriate intermediate spaces of the pair $\langle L^\infty,L^1\rangle$. Moreover, it is shown that quantum dynamics in the form of Markov semigroups described by some Dirichlet forms naturally extends to the context proposed in \cite{ML1}.
\end{abstract}

\maketitle

\tableofcontents

\section{Introduction}
In our previous paper \cite{ML1} we have argued that statistical physics of regular systems, both classical and quantum, should be based on the pair of Orlicz spaces $\langle L^{\cosh - 1}, L\log(L+1)\rangle$. 

We remind the reader that a regular observable is characterized by the property of finiteness of all its moments. Although at first sight this property can be regarded as restrictive, the important point to note here is that the standard formulation of classical (quantum) theory based on the pair of Banach spaces $\langle L^{\infty}, L^1\rangle$ ( $\langle B(\cH), \mathfrak{F}(\cH)\rangle$, respectively) with pure deterministic time evolution, as well as such important classes of L\'evy processes as Wiener and Poisson processes do satisfy the above regularity requirements. Moreover, the proposed extension of the allowed family of observables includes regular \textit{unbounded} observables.

However, to get a fully-fledged theory, a description of dynamics should be provided. In particular, one wants to describe dynamical semigroups within the proposed scheme based on the pair of Orlicz spaces $\langle L^{\cosh - 1}, L\log(L+1)\rangle$. The main difficulty in carrying out such a description is that the standard interpolation theory must be adapted to a framework based on non-commutative Orlicz spaces.
We will do this in sections devoted to the study of quantum maps on the distinguished Orlicz spaces.

The paper is organized as follows: In the next section we set up notation and terminology. Moreover, for the convenience of the reader we repeat the relevant material from \cite{ML1}, thus making our exposition self-contained. In Section 3 we give a brief exposition of the theory of crossed products.
Dynamical maps on crossed products as well as on certain subsets of measurable operators will be considered in Section 4.
Section 5 is devoted to the study of quantum maps on the set of regular observables. Moreover, an exposition of quantum maps defined via embeddings will be provided. Section 6 establishes the relations between Dirichlet forms on non-commutative spaces and quantum maps on the studied Orlicz spaces. The last section contains some conclusions and remarks.

\section{Notation, terminology, and previous results}
We follow notation used in \cite{ML1} and \cite{ML2}. Let
$(X, \mu)$ be a measure space. We denote $L^1(X, \mu)= \{ f : \int_X |f|d\mu < \infty\}$, while $L^{\infty}(X, \mu)$ stands for the essentially bounded, measurable functions on $X$.
Their non-commutative analogues are: $\mathfrak{F}_T(\mathcal{H})$ - the trace class operators on a Hilbert space $\mathcal{H}$, and 
$B(\mathcal{H})$ -- all linear bounded operators on $\mathcal{H}$.
We remind the reader that
$L^p(X, \Sigma, m)$ spaces ($1\leq p < \infty$), $(X, \Sigma, m)$ a measure space,  may be regarded as spaces of measurable functions conditioned by the functions $t \mapsto |t|^p$ ($1\leq p < \infty$).
 The more general category of Orlicz spaces is defined as spaces of measurable functions conditioned by a more general class of convex functions; the so-called Young's functions.
 A function $\Phi: [0, \infty) \to [0, \infty]$ is a Young's function if $\Phi$ is convex, $\lim_{u \to 0+} \Phi(u) = \Phi(0) = 0$, $\lim_{u\to \infty}\Phi(u)=\infty$, and $\Phi$ is non-constant on $(0,\infty)$. It is worth pointing out that such functions have  a nice integral representation, for details see \cite{ML1} and the references given there.
 
 Let $L^0$ be the space of measurable 
functions on some $\sigma$-finite measure space $(X, \Sigma, \mu)$. We will always assume  that the considered measures
are $\sigma$-finite.
\begin{definition}
The Orlicz space 
$L^{\Psi}$ (being a Banach space) associated with  $\Psi$ is defined to be the set 
\begin{equation}\label{3}
L^{\Psi} \equiv L^{\Psi}(X, \Sigma, \mu) = \{f \in 
L^0 : \Psi(\lambda |f|) \in L^1 \quad \mbox{for some} \quad \lambda = \lambda(f) > 0\}.
\end{equation}
\end{definition}
The basic Orlicz spaces used in this paper are $L\log(L+1)$, and $L^{\cosh - 1}$ defined by Young's functions: $x \mapsto x \log(x+1)$,
and $x \mapsto \cosh(x) - 1$ respectively.
  
For the finite measure case, the  spaces $L\log(L+1)$ and $L^{\cosh -1}$ can be identified with Zygmund spaces. They are defined as follows (cf \cite{BS}):
\begin{itemize}
\item $L\log L$ is defined by the following Young's function
$$s\log^+ s = \int_0^s \phi(u) du$$
where $\phi(u) =0 $ for $0\leq u\leq1$ and $\phi(u) = 1 + \log  u$ for $1< \infty $, where $\log^+x = \max (\log x, 0)$
\item $L_{\exp}$ is defined by the Young's function
$$ \Psi(s) = \int_0^s \psi(u)du, $$
where $\psi(0) = 0$ , $\psi(u) = 1$ for $0<u<1$, and $\psi(u)$ is equal to $e^{u -1}$ for $1 < u < \infty$.
Thus
$\Psi(s) = s$ for $0\leq s \leq 1$ and  $\Psi(s) = e^{s - 1}$ for $1< s < \infty$.
\end{itemize}

There is a natural question: \textit{what can be said about uniqueness of the correspondence: Young's function $\Psi \mapsto L^{\Psi}$-Orlicz space}. To answer this question one needs the concept of equivalent Young's functions. To define it we will write $F_1 \succ F_2$ if and only if $F_1(bx) \geq F_2(x)$ for $x\geq 0$ and some $b>0$, and we say that the functions $F_1$ and $F_2$ are equivalent, $F_1 \approx
F_2$, if $F_1\prec F_2$ and $F_1\succ F_2$. One has (see \cite{RR})

\begin{theorem}
\label{2.6}
Let $\Phi_i$, $i =1,2$ be a pair of equivalent Young's function. Then $L^{\Phi_1} = L^{\Phi_2}$. 
\end{theorem}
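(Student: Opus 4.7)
The plan is to argue directly from the definition of the Orlicz space, showing $L^{\Phi_1}\subseteq L^{\Phi_2}$ and then invoking the symmetry in the definition of $\Phi_1\approx\Phi_2$ to obtain the reverse inclusion. Since the definition of $L^{\Psi}$ in \eqref{3} quantifies existentially over the scaling parameter $\lambda>0$, the constant $b$ appearing in the equivalence relation $F_1\succ F_2$ can simply be absorbed into a rescaling of $\lambda$; this is really the heart of the matter.

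In detail, suppose $f\in L^{\Phi_1}$, so that there is some $\lambda>0$ with $\Phi_1(\lambda|f|)\in L^1$. From $\Phi_1\succ\Phi_2$ we obtain $b>0$ with $\Phi_1(bx)\geq \Phi_2(x)$ for every $x\geq 0$. Substituting $x=\lambda|f(t)|/b$ yields the pointwise bound
\[
\Phi_2\!\left(\tfrac{\lambda}{b}|f(t)|\right)\leq \Phi_1(\lambda|f(t)|)
\]
almost everywhere. Setting $\mu=\lambda/b>0$, monotonicity of the Lebesgue integral gives $\Phi_2(\mu|f|)\in L^1$, so $f\in L^{\Phi_2}$. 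Applying the same argument with the roles of $\Phi_1$ and $\Phi_2$ interchanged (using $\Phi_1\prec\Phi_2$) yields $L^{\Phi_2}\subseteq L^{\Phi_1}$.

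A couple of routine technicalities should be checked along the way. First, $\Phi_j(\mu|f|)$ really is measurable: a Young's function is convex with $\Phi(0)=0$ and takes values in $[0,\infty]$, hence is non-decreasing on $[0,\infty)$ and Borel measurable, so its composition with the measurable function $\mu|f|$ is measurable. Second, if one wishes to interpret $L^{\Phi_1}=L^{\Phi_2}$ not merely as equality of sets of measurable functions but as equality of Banach spaces, the same pointwise inequality shows that the Luxemburg norms satisfy $\|f\|_{\Phi_2}\leq b\|f\|_{\Phi_1}$ and, symmetrically, $\|f\|_{\Phi_1}\leq b'\|f\|_{\Phi_2}$, giving equivalence of norms.

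There is really no substantial obstacle here; the argument is a direct unwinding of the definition of the Orlicz space against the definition of equivalent Young's functions. The only point requiring a moment's care is the measurability of $\Phi_j(\mu|f|)$, which is immediate once one observes that Young's functions are monotone on $[0,\infty)$.
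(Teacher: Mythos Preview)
Your argument is correct. The paper does not actually supply a proof of this theorem; it simply states the result with a reference to Rao--Ren \cite{RR}, so there is nothing to compare against beyond noting that what you have written is precisely the standard textbook argument one finds in such a reference.
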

Consequently, on condition that equivalence is preserved, \textit{one can ``manipulate'' Young's function's!}

Our main results concerning classical statistical physics, stated and proved in \cite{ML1} (see also \cite{ML2}) are:
\begin{theorem}
\label{klasyczne}
 The dual pair $\langle L^{\cosh -1}, L\log(L+1)\rangle$  provides the basic mathematical ingredient for a description of a general, classical regular system while,
 for the finite measure case, the above pair of Orlicz spaces is an equivalent renorming of the pair of Zygmund spaces $( L_{\exp}, L\log L)$.
\end{theorem}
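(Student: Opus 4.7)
The plan is to split the theorem into three interrelated claims: (i) that $L^{\cosh-1}$ captures precisely the regular observables, (ii) that its Orlicz dual is $L\log(L+1)$, and (iii) that in the finite-measure setting these two spaces coincide with the Zygmund spaces up to equivalent renorming. Throughout I would lean heavily on Theorem \ref{2.6}, so that equivalent Young's functions can be substituted freely.

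For (i), I would start from the Taylor expansion $\cosh(x)-1 = \sum_{n \geq 1} x^{2n}/(2n)!$ and the characterization that $f \in L^{\cosh-1}$ iff $\int(\cosh(\lambda |f|)-1)\,d\mu < \infty$ for some $\lambda > 0$. Monotone convergence applied termwise shows this is equivalent to $\sum_{n\geq 1}\lambda^{2n} \|f\|_{2n}^{2n}/(2n)! < \infty$, which (since $\lambda$ may be chosen freely) is precisely the standard regularity condition of exponential control on all moments of $f$. Hence $L^{\cosh-1}$ is tautologically the space of regular observables.

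For (ii), I would compute the complementary Young's function of $\Phi(x)=\cosh(x)-1$. Since $\Phi'(x)=\sinh(x)$, the Legendre transform is
\begin{equation*}
\Psi(y) = \int_0^y \sinh^{-1}(u)\,du = y\sinh^{-1}(y) - \sqrt{1+y^2}+1,
\end{equation*}
and I would then verify by direct asymptotic comparison that $\Psi(y) \sim y^2/2$ as $y\to 0^+$ and $\Psi(y) \sim y\log y$ as $y\to\infty$, matching the behaviour of $y \mapsto y\log(y+1)$ on both ends. Producing constants $b_1,b_2 > 0$ with $\Psi(b_1 y) \geq y\log(y+1)$ and $y\log(b_2 y+1) \geq \Psi(y)$ for all $y \geq 0$ then yields $\Psi \approx y\log(y+1)$, and Theorem \ref{2.6} together with the standard Orlicz duality $\big(L^\Phi\big)^\ast = L^\Psi$ (in K\"othe-dual form) closes the argument.

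For (iii), I would invoke the well-known fact that on a finite measure space the Orlicz space $L^\Phi$ depends (as a set, and up to an equivalent norm) only on the behaviour of $\Phi$ near infinity: if $\Phi_1 \approx \Phi_2$ on $[c,\infty)$ for some $c > 0$, then $L^{\Phi_1}=L^{\Phi_2}$ with equivalent norms. Since $\cosh(x)-1 \sim \tfrac12 e^{x}$ as $x\to\infty$ while the defining function of $L_{\exp}$ equals $e^{x-1}$ for $x > 1$, the two are equivalent at infinity, so $L^{\cosh-1}=L_{\exp}$ as sets with equivalent norms in the finite-measure case. Analogously, $x\log(x+1) \sim x\log x$ at infinity while the defining function of $L\log L$ is $\int_0^x \phi(u)\,du$ with $\phi(u)=1+\log u$ for $u>1$, giving the same asymptotics; hence $L\log(L+1)=L\log L$ under equivalent renorming.

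The main obstacle is step (ii): asymptotic equivalence at $0$ and $\infty$ of $\Psi$ and $y\log(y+1)$ is easy, but producing explicit constants that make the inequalities $\Psi(b_1 y) \geq y\log(y+1)$ and $y\log(b_2 y +1) \geq \Psi(y)$ hold \emph{uniformly} on $[0,\infty)$, so that the $\approx$ relation of Theorem \ref{2.6} genuinely applies on the whole half-line, requires a careful monotonicity argument using convexity of both Young's functions and the transition between the $y^2$ and $y\log y$ regimes.
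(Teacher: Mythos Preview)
The present paper does not actually prove this theorem: it is stated in Section~2 as a summary of results ``stated and proved in \cite{ML1} (see also \cite{ML2})'', so there is no in-paper proof to compare your attempt against. What follows are therefore comments on your outline itself.

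Your treatment of parts (ii) and (iii) is sound and follows the standard route (Legendre dual plus asymptotic matching, then the finite-measure principle that only large-argument behaviour of the Young's function matters). These are presumably close in spirit to what is done in \cite{ML1}, and the ``main obstacle'' you flag in (ii) is real but routine: once you have matching asymptotics at $0$ and $\infty$ for two Young's functions, convexity and continuity on a compact middle interval give the global constants.

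Part (i), however, overreaches. You claim that $f\in L^{\cosh-1}$ is ``tautologically'' the regularity condition of having all moments finite, but this is not a set-theoretic equivalence. Membership in $L^{\cosh-1}$ requires $\sum_{n\geq 1}\lambda^{2n}\|f\|_{2n}^{2n}/(2n)!<\infty$ for some $\lambda>0$, which imposes a factorial-type growth bound on the moments, not merely their finiteness; there are functions with all moments finite whose moments grow too fast for any such $\lambda$ to work. The first clause of the theorem is not a characterisation result but rather the thesis, argued at length in \cite{ML1}, that the pair $\langle L^{\cosh-1}, L\log(L+1)\rangle$ is the \emph{appropriate} Banach-space setting for regular systems (good duality, entropy lives in $L\log(L+1)$, etc.). Your outline should present (i) as motivation for the choice of space, not as an equivalence to be proved.
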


Turning to the quantum case, as a first step, one should define the quantum counterpart of measurable functions $L^0$. In the quantum world there is no known space that is a direct analogue of $L^0$. But one is able to define a quantum analogue of the space of all measurable functions which are bounded, except on a set of finite measure. This space turns out to be more than adequate for our purposes. So to this end let $\mathfrak{M} \subset B(\mathcal{H})$ be a semifinite von Neumann algebras 
 equipped with an fns (faithful normal semifinite) trace $\tau$. The space of all $\tau$-measurable operators is defined as follows.
 Let $a$ be a densely defined closed operator on $\mathcal{H}$ with domain ${\mathcal D}(a)$ and let
$a = u|a|$ be its polar decomposition.
One says that $a$ is affiliated with $\mathfrak{M}$ (denoted $a \eta \mathfrak{M}$) if $u$ and all the spectral projections of $|a|$ belong to $\mathfrak{M}$.
Then $a$ is $\tau$-measurable if $a \eta \mathfrak{M}$, and for each $\delta >0$, there exists a projection $e \in \mathfrak{M}$ such that $e\mathcal{H} \subset {\mathcal D}(a)$ and $\tau(1 - e)\leq \delta$.
We denote by $\widetilde{\mathfrak{M}}$ the set of all $\tau$-measurable operators.
The algebra $\widetilde{\mathfrak{M}}$ (equipped with the topology of convergence in measure) 
is a substitute for 
$L^0$ in the quantum world (for details see \cite{nelson}, \cite{terp}, and \cite{se}).

Following the Dodds, Dodds, de Pagter approach \cite{DDdP} we need 
the concept of generalized singular values. Namely,
given an element $f \in \widetilde{\mathfrak{M}}$ and $t \in [0, \infty)$, the generalized singular 
value $\mu_t(f)$ is defined by $\mu_t(f) = \inf\{s \geq 0 : \tau(\jed - e_s(|f|)) \leq t\}$ 
where $e_s(|f|)$ $s \in \mathbb{R}$ is the spectral resolution of $|f|$. The function $t \to 
\mu_t(f)$ will generally be denoted by $\mu(f)$. For details on the generalized singular values 
see \cite{FK}.  Here, we note only that this directly extends classical notions where for any $f \in L^0{}$, 
the function $(0, \infty) \to [0, \infty] : t \to \mu_t(f)$ is known as the decreasing 
rearrangement of $f$. 

The key ingredient of the Dodds, Dodds, de Pagter approach is the concept of a Banach Function Space. To define this concept, let $L^0(0, \infty)$
stand for
measurable functions on $(0, \infty)$ and $L^0_+$ denote $\{ f \in L^0(0,  \infty); f \geq 0\}$. A function norm

$\rho$ on $L^0(0, \infty)$ is defined to be a mapping $\rho : L^0_+ \to [0, \infty]$ satisfying
\begin{itemize}
\item $\rho(f) = 0$ iff $f = 0$ a.e.  
\item $\rho(\lambda f) = \lambda\rho(f)$ for all $f \in L^0_+, \lambda > 0$.
\item $\rho(f + g) \leq \rho(f) + \rho(g)$ for all $f, g \in L^0_+$.
\item $f \leq g$ implies $\rho(f) \leq \rho(g)$ for all $f, g \in L^0_+$.
\end{itemize}
Such a $\rho$ may be extended to all of $L^0$ by setting $\rho(f) = \rho(|f|)$, in which case 
we may then define $L^{\rho}(0, \infty) = \{f \in L^0(0, \infty) : \rho(f) < \infty\}$. If 
now $L^{\rho}(0, \infty)$ turns out to be a Banach space when equipped with the norm 
$\rho(\cdot)$, we refer to it as a Banach Function space. If $\rho(f) \leq \lim\inf_n\rho(f_n)$ 
whenever $(f_n) \subset L^0$ converges almost everywhere to $f \in L^0$, we say that $\rho$ 
has the Fatou Property. If less generally this implication only holds for $(f_n) \cup \{f\} 
\subset L^{\rho}$, we say that $\rho$ is lower semi-continuous. If further the situation $f 
\in L^\rho$, $g \in L^0$ and $\mu_t(f) = \mu_t(g)$ for all $t > 0$, forces $g \in L^\rho$ and 
$\rho(g) = \rho(f)$, we call $L^{\rho}$ rearrangement invariant (or symmetric).

By employing  generalized singular values and Banach Function Spaces, 
 Dodds, Dodds and de Pagter \cite{DDdP} formally defined the noncommutative space 
$L^\rho(\widetilde{\mathfrak{M}}) \equiv L^\rho(\mathfrak{M}, \tau) \equiv L^\rho(\mathfrak{M})$ to be  $$L^\rho({\mathfrak{M}}) = \{f \in \widetilde{\mathfrak{M}} : \mu(f) \in 
L^{\rho}(0, \infty)\}$$ and showed that if $\rho$ is lower semicontinuous and $L^{\rho}(0, 
\infty)$ rearrangement-invariant, $L^\rho({\mathfrak{M}})$ is a Banach space when equipped 
with the norm $\|f\|_\rho = \rho(\mu(f))$. 

Having quantized Orlicz spaces we showed that in the quantum context, a quantized version of Theorem \ref{klasyczne} is valid (see \cite{ML1} and \cite{ML2} for details).

\begin{theorem}
\label{kwantowe}
 The dual pair of quantum Orlicz spaces $(L^{\cosh -1}, L\log(L+1))$  provides the basic mathematical ingredient for a description of a general, quantum regular system while,
 for the finite measure case, the above pair of Orlicz spaces is an equivalent renorming of the pair of quantum Zygmund spaces $( L_{\exp}, L\log L)$.
\end{theorem}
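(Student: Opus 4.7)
My plan is to reduce the quantum statement to the classical Theorem \ref{klasyczne} via the Dodds--Dodds--de~Pagter (DDdP) correspondence recalled above: a lower-semicontinuous rearrangement-invariant function norm $\rho$ on $(0,\infty)$ lifts to a non-commutative Banach function space $L^\rho(\mathfrak{M})$ through the generalized singular value map $x \mapsto \mu(x)$, and the key structural features---Fatou property, rearrangement invariance, and Köthe duality---transfer along this correspondence. Thus the bulk of the argument consists of identifying $\cosh-1$ and $x\log(x+1)$ as generating lower-semicontinuous rearrangement-invariant function norms, and then passing from $a\in\widetilde{\mathfrak{M}}$ to $\mu(a)\in L^0(0,\infty)$ to invoke the classical result.

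First I would characterize the quantum regular observables. A self-adjoint $a \in \widetilde{\mathfrak{M}}$ is \emph{regular} when $\tau(|a|^n)<\infty$ for every $n \geq 1$. By the Fack--Kosaki integral identity
\[
\tau(\Phi(|a|)) = \int_0^\infty \Phi(\mu_t(a))\,dt
\]
(valid for Borel $\Phi \geq 0$ with $\Phi(0)=0$), the quantum moment condition on $a$ reduces to the classical moment condition on $\mu(a)$, and similarly $a \in L^{\cosh-1}(\mathfrak{M})$ if and only if $\mu(a)\in L^{\cosh-1}(0,\infty)$. The classical half of Theorem \ref{klasyczne} then identifies $L^{\cosh-1}(\mathfrak{M})$ as exactly the space of quantum regular observables.

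For the duality I would exploit that $\cosh(x)-1$ and $y\log(y+1)$ form a complementary pair of Young functions up to equivalence (a direct computation of the Legendre transform gives $\Phi^*(y)=y\sinh^{-1}(y)-\sqrt{1+y^2}+1$, which is equivalent to $y\log(y+1)$ both near $0$ and at $\infty$), so classical Orlicz Köthe duality places $L\log(L+1)(0,\infty)$ as the Köthe dual of $L^{\cosh-1}(0,\infty)$ under integration. Lifting through DDdP and replacing integration by the trace pairing $(x,y)\mapsto\tau(xy)$ (well-defined via a non-commutative Orlicz--Hölder inequality) gives the dual-pair structure at the quantum level. For the finite-trace clause I would invoke Theorem \ref{2.6}: when $\tau(\jed)<\infty$ only the large-$x$ asymptotics of the Young functions matter, and since $\cosh(x)-1\sim\tfrac12 e^x$ whereas the function defining $L_{\exp}$ behaves like $e^{-1}e^x$, and $x\log(x+1)\sim x\log^+ x$, the Young functions are pairwise equivalent, yielding the claimed equivalent renormings $L^{\cosh-1}(\mathfrak{M})=L_{\exp}(\mathfrak{M})$ and $L\log(L+1)(\mathfrak{M})=L\log L(\mathfrak{M})$.

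The main technical obstacle I anticipate is checking that the trace pairing realises $L\log(L+1)(\mathfrak{M})$ as the \emph{full} Köthe dual of $L^{\cosh-1}(\mathfrak{M})$ rather than merely a subspace of it, since the non-commutative setting lacks the pointwise order used in classical Köthe-duality arguments. This is precisely where rearrangement invariance and the full DDdP machinery carry the load, and where a careful transfer from classical Orlicz duality---via $\mu$ and the associate norm construction---is required.
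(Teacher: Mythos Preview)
The paper does not prove Theorem \ref{kwantowe} in this text; it appears in the section ``Notation, terminology, and previous results'' and is explicitly attributed to the authors' earlier work, with the sentence ``Having quantized Orlicz spaces we showed that in the quantum context, a quantized version of Theorem \ref{klasyczne} is valid (see \cite{ML1} and \cite{ML2} for details).'' There is therefore no in-text argument to compare your proposal against.

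That said, your outline is precisely the route one expects from the surrounding exposition and is consonant with the DDdP philosophy the paper recalls: pass from $a\in\widetilde{\mathfrak{M}}$ to $\mu(a)\in L^0(0,\infty)$ via the Fack--Kosaki formula $\tau(\Phi(|a|))=\int_0^\infty\Phi(\mu_t(a))\,dt$, invoke the classical Theorem \ref{klasyczne}, and lift back. The finite-trace clause via equivalence of Young functions at infinity and Theorem \ref{2.6} is correct and standard. Your flagged obstacle is the genuine one: to identify $L\log(L+1)(\mathfrak{M})$ as the full K\"othe dual (not merely a subspace) you need the non-commutative associate-space theory of \cite{DDdP} and \cite{DDdP2}, which in turn requires checking that the Luxemburg norms for $\cosh-1$ and $x\log(x+1)$ have the Fatou property (they do, being Orlicz norms). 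Once that is in place the classical K\"othe duality transfers. Your sketch is sound; what remains is to cite or verify those DDdP ingredients rather than leave them as an anticipated difficulty.
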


\section{Crossed products}

In contexts where we are dealing with a von Neumann algebra which does not have a trace, we do not have access to the elegant theory of Dodds, Dodds, and de Pagter. In such cases we will follow the philosophy of Haagerup and Terp, which makes essential use of the notion of crossed products of von Neumann algebras to posit quantum $L^p$-spaces. For the sake of the reader we briefly review some of the essential facts regarding continuous crossed products, before going on to analyse the behaviour of quantum dynamical maps with respect to these crossed products.

Let $\mathfrak{M}$ be a $\sigma$-finite von Neumann algebra acting on a separable Hilbert space $\cH$. 
A fixed faithful normal state $\omega$ on $\mathfrak{M}$ will be defined by $\omega(x) = (\Omega, x \ \Omega)$  where  $\Omega \in \cH$ is a cyclic and separating vector.
Thus, we will be concerned with the standard form
\begin{equation}
(\mathfrak{M}, \cH, \cP, J, \Omega),
\end{equation}
where $\cP$ ($J$) is  the natural cone (modular conjugation respectively). The modular automorphism $\sigma_t$
of $\mathfrak M$ is given by $\sigma_t(\cdot) = \Delta^{it}x \Delta^{-it}$, where $t \in \Rn$, $x \in \mathfrak M$ and $\Delta$ is the modular operator.

The crossed product algebra $\mathfrak{M} \rtimes_{\sigma} \Rn$ is a von Neumann algebra, acting on $L^2(\Rn, \cH)$, and generated by operators
$\pi(x)$ and $\lambda(t)$, defined by (cf \cite{vD}, \cite{KR})
\begin{equation}
(\pi(x)\xi)(s) = \sigma_{-s}(x) \xi(s),
\end{equation}
\begin{equation}
(\lambda(t) \xi)(s) = \xi(s - t),
\end{equation}
where $x \in \mathfrak M$, $t \in \Rn$, and $\xi \in C_c(\Rn, \cH)$ - the space of continuous functions on $\Rn$ with values in $\cH$ and compact supports.

It follows from the definition of the crossed product that (cf \cite{vD})
\begin{enumerate}
\item $\lambda(t) \pi(x) \lambda^*(t) = \pi(\sigma_t(x))$ for $x \in \mathfrak{M}$ and $t \in \Rn$.
\item $\pi(x)\lambda(t) \pi(y) \lambda(s) = \pi(x\sigma_t(y)) \lambda(ts)$,
\item $(\pi(x) \lambda(t))^* = \pi(\sigma_{-t}(x^*)) \lambda(-t)$,
\item $\mathfrak{M} \rtimes_{\sigma} \Rn$ is the closure of the $^*$-algebra of linear combinations of products $\lambda(s)\pi(x)$ with $x \in \mathfrak M$ and $s \in \Rn$.
\end{enumerate}
To clarify the definition of crossed products given above, we wish to make the following remarks:

\begin{remark}
\begin{enumerate}
\item $L^2(\Rn, \cH)$ can be canonically identified with $\cH \otimes L^2(\Rn)$ by
\begin{equation}
(U(\xi_0 \otimes f))(s) = f(s) \xi_0,
\end{equation}
for any $\xi_0 \in \cH$ and $f \in C_c(\Rn)$ ($C_c(\Rn)$ - the space of continuous complex valued functions on $\Rn$ with compact supports), see Proposition 2.2 in \cite{vD}.
\item
Let $\lambda_t$ denote the left translation by $- t$ in $L^2(\Rn)$. Then , see  Proposition 2.8 in \cite{vD}, 
\begin{equation}
\lambda(t)= \jed \otimes \lambda_t.
\end{equation}
\item $\mathfrak{M} \rtimes_{\sigma} \Rn$ is spatially isomorphic to the von Neumann algebra on $\cH \otimes L^2(\Rn)$ generated by the operators 
$$ \{ x \otimes \jed, \Delta^{is} \otimes \lambda_s; x \in \mathfrak{M}, s \in \Rn \},$$
see Proposition 2.12 in \cite{vD}.
\item The dual action $\widehat{\sigma}$ of the dual group $\widehat{\Rn} \equiv \Rn$ on $\mathfrak{M} \rtimes_{\sigma} \Rn$ is defined as
\begin{equation}
\label{1.6}
\widehat{\sigma_t}(a) = (\jed \otimes v_t)a (\jed \otimes v^*_t),
\end{equation}
where $a \in \mathfrak{M} \rtimes_{\sigma} \Rn$, and $(v_tf)(s) = e^{-its} f(s)$ for any $f \in C_c(\Rn)$.
\end{enumerate}
\end{remark}

It is essential to note that $\mathfrak{M} \rtimes_{\sigma} \Rn$ \textit{is a semifinite  von Neumann algebra}, and the canonical faithful semi-finite trace is defined on $(\mathfrak{M} \rtimes_{\sigma} \Rn)^+$ is
\begin{equation}
\label{1.7}
\tau = \sup_K \tau_K,
\end{equation}
where $\tau_K(a) = (\xi_K, a \xi_K)$, $a \in \mathfrak{M} \rtimes_{\sigma} \Rn$. Here  $\xi_K = \Omega \otimes \cF^* f_K$, where $f_K(s) = \chi_K(s) \exp{\frac{s}{2}}$, and $\cF$ stands for the Fourier transform on $L^2(\Rn)$. $K$ denotes a compact subset in $\Rn$.

Furthermore, see Lemma 3.3 in \cite{vD}, if $f \in C_c(\Rn)$ and it has support in the compact set $K$, then 
\begin{equation}
\label{1.8}
\tau_K(\pi(x) \lambda(s) \lambda(f)) = 2 \pi \widehat{f}(i + s) \omega(x),
\end{equation}
where $\lambda(f) = \jed \otimes \lambda_f$, $\lambda_f = \cF^* m_f \cF$, and $m_f$ is the multiplication operator by $f$ in $L^2(\Rn)$.

We wish to end these preliminaries with:
\begin{remark}
Although one has (cf Lemma 3.1 in \cite{vD})
\begin{equation}
\mathfrak{M} \rtimes_{\sigma} \Rn \subseteq \mathfrak{M} \otimes \cB(L^2(\Rn)),
\end{equation}
one can not expect, in general, that $\mathfrak{M} \rtimes_{\sigma} \Rn$ would be of the form $\mathfrak{M} \otimes \mathfrak{N}$ for a von Neumann algebra $\mathfrak{N}$. The simplest argument supporting this claim is that then $\mathfrak{M} \rtimes_{\sigma} \Rn$ would be of type III for any $\mathfrak{M}$ of type III (see Table 11.2 as well as Proposition 11.2.26 in \cite{KR}) which would contradict (\ref{1.7}) and the semifinitness of $\mathfrak{M} \rtimes_{\sigma} \Rn$.
\end{remark}

\section{Dynamical maps on measurable operators}
\subsection{Detailed Balance Condition}
\label{DBCo}
Let $T: \mathfrak{M} \to \mathfrak{M}$ be a positive, normal, unital map. Such maps will be called \textit{Markov maps}. The class of Markov maps seems to be too general to describe the most interesting genuine dynamics.
Hence, to select more regular maps we define:
\begin{definition}
\label{DBC}
A Markov map satisfies the Detailed Balance Condition (for brevity DBC) with respect to a state $\omega$ on $\mathfrak M$ if the following conditions are satisfied (see  \cite{Maj2}, \cite{Maj1})
\begin{equation}
\omega(x^*T(y)) = \omega(\Theta(y^*) T \Theta(x))
\end{equation}
for any $x,y \in \mathfrak M$, where $\Theta$ is a reversing operation, i.e. an antilinear Jordan morphism on $\mathfrak M$ such that $\Theta^2 = \rm{identity \ map}$, and $\omega(\Theta(xy)) = \omega(\Theta(x) \Theta(y))$.
\end{definition}

DBC implies that (see \cite{Maj1}):
\begin{equation}
\label{1.10}
\omega(T(x)) = \omega(x), \quad x \in \mathfrak M.
\end{equation}
and that 
\begin{equation}
\label{for1.12}
\widehat{T} x \Omega = T(x) \Omega
\end{equation}
defines a bounded operator on $\cH$ which commutes with the modular operator $\Delta$.
Moreover, it is an easy observation to make that 
\begin{equation}
T \circ \sigma_t = \sigma_t \circ T, \quad \rm{ for \ any} \ t \in \Rn.
\end{equation}
To see this note that for any $x \in \mathfrak{M}$ and any $y^{\prime} \in \mathfrak{M}^{\prime}$ ($\mathfrak{M}^{\prime}$ stands for the commutator of $\mathfrak{M}$) one has:
$$T(\sigma_t(x))y^{\prime} \Omega = y{\prime} T(\sigma_t(x))\Omega 
= y^{\prime} \widehat{T} \sigma_t(x) \Omega = y^{\prime} \widehat{T} \Delta^{it} x \Omega = y^{\prime} \sigma_t(T(x)) \Omega = \sigma_t(T(x)) y^{\prime} \Omega$$ 
which proves the claim.

Before proceeding further let us pause to make some important remarks on the DBC.

\begin{remark}
\begin{enumerate}
\item There are various versions of DBC. For example, one can use the more general form of DBC which was given in \cite{MS}. However, the form given here has a more ``transparent'' physical interpretation.
In particular, to the best of our knowledge, only the form of DBC given in Definition \ref{DBC} leads to a one-to-one correspondence between dynamical semigroups on the set of observables and semigroups on the Hilbert space of (state) vectors respectively (see \cite{Maj1}). 
\item Frequently, DBC is related to KMS symmetry. However it is important to note that only DBC forces the map $T$ to commute with the authormorphism group, and this property will be essential in our analysis.
\item \textbf{But}, in general, tensor product structure is not respected by DBC. Namely, if a (positive) map $T: \mathfrak{M} \to \mathfrak{M}$ satisfies DBC then  $T \otimes id: \mathfrak{M} \otimes \mathfrak{N} \to \mathfrak{M} \otimes \mathfrak{N}$, where $\mathfrak{N}$ is a $^*$-algebra, does not need to be a positive map. Therefore,
one can not expect that an extension of a positive map  $T$ on the tensor product structure will satisfy DBC. Consequently, to get well defined dynamical maps on the crossed products, a further selection of positive maps should be done. To this end, \textit{ complete positivity will be assumed additionally}. 
\item For a recent account on DBC we refer the reader to \cite{FU}.
\end{enumerate}
\end{remark}

\subsection{Extension of dynamical maps to crossed products and their corresponding algebra of $\tau$-measurable operators}

\vskip 1cm

In our study of quantum maps, we need to canonically extend a dynamical map $T$ defined on a von Neumann algebra $\mathfrak M$ to a corresponding map which is defined on a certain noncommutative Orlicz space. As a first step we have to extend $T$ to the corresponding crossed product. Here, we will follow the definition given in \cite{HJX}.

\begin{definition}
\label{1.5}
Let $T:\mathfrak{M} \to \mathfrak{M}$ be a positive, unital map satisfying DBC with respect to a faithful, normal state $\omega(\cdot) = (\Omega, \cdot \ \Omega)$.
Define
\begin{equation}
\tT(\lambda(t) \pi(x)) = \lambda(t) \pi(T(x)),
\end{equation}
for $t \in \Rn$, and $x \in \mathfrak{M}$.
\end{definition}

\begin{remark}
\label{prob1.6}
To have a well defined linear map $\tT$ on $\mathfrak{M} \rtimes_{\sigma} \Rn$ one wishes to have
\begin{equation}
\tT(\widetilde{x}) = \tT(\sum_i \lambda(s_i) \pi(x_i)) = \sum_i \lambda(s_i) \pi(T(x_i)),
\end{equation}
where $\widetilde{x} = \sum_i \lambda(s_i) \pi(x_i) \in \mathfrak{M} \rtimes_{\sigma} \Rn$.
But to guarantee the well definiteness of the linear map $\tT$, so to have $\tT(0) = 0$, one should be able to show that
\begin{equation}
\label{1.16}
\sum_i \lambda(s_i) \pi(x_i) = 0
\end{equation}
implies
\begin{equation}
\sum_i \lambda(s_i) \pi(T(x_i)) = 0
\end{equation}
for $s_i \in \Rn$ and $x_i \in \mathfrak{M}$.
To this end let us consider (\ref{1.16}) in detail. Namely, note that (\ref{1.16}) implies $\|\sum_i \lambda(s_i) \pi(x_i) \xi\| = 0$ for any $\xi \in L^2(\Rn, \cH)$. Taking $\xi(t)$ to be of the form $\xi(t) = f(t)\Omega $ with $f \in L^2(\Rn)$, one has
\begin{eqnarray}
0 &=& \int_{\Rn} \sum_{i,j}\Big(\lambda(s_i) \pi(x_i) f(t) \Omega, \lambda(s_j) \pi(x_j) f(t) \Omega\Big)dt\\
&=& \int_{\Rn} \sum_{i,j} \big(\sigma_{s_i}(x_i) \Omega, \sigma_{s_j}(x_j) \Omega\big) \overline{f(t - s_i)} f(t - s_j) dt\nonumber\\
&=& \int_{\Rn} \sum_{i,j} \omega(\sigma_{s_i}(x^*_i)\sigma_{s_j}(x_j)) \overline{f(t - s_i)} f(t - s_j) dt.\nonumber
\end{eqnarray}
Further note that on setting $a_{i,j} \equiv \int_{\Rn}\overline{f(t - s_i)} f(t - s_j) dt$,
one obtains the positive definite matrix $a_{i,j}$. As $f \in L^2(\Rn)$ is an arbitrary function one can then expect that $a_{i,j}$ is an arbitrary positive defined matrix. 
As the matrix $a$ is positive $a \equiv \{a_{i,j} \} \geq 0$, it should then be of the form $a = b^* b$, see Lemma 3.1 in Chapter IV \cite{Tak}. Therefore $a_{i,j} = \sum_k\overline{b_{k,i}} \ b_{k,j}.$
Consequently, the fact that $\|\sum_i \lambda(s_i) \pi(x_i) \xi\| = 0$, combined with our choice of the vector $\xi$ leads to
\begin{eqnarray}
0 &=& \sum_{i,j} \omega(\sigma_{s_i}(x^*_i)\sigma_{s_j}(x_j)) a_{i,j} = \sum_{i,j,k}  \omega(\sigma_{s_i}(x^*_i)\sigma_{s_j}( x_j)) \overline{b_{k,i}} \ b_{k,j}\\
& =& \sum_k \omega(\Big(\sum_i b_{k,i} \sigma_{s_i}(x_i)\Big)^* \Big(\sum_j b_{k,j} \sigma_{s_j}(x_j)\Big)).\nonumber
\end{eqnarray}
But, $\Omega$ is cyclic and separating, so $\omega$ is a faithful state. Thus, one gets 
\begin{equation}
\label{1.20a}
\sum_i b_{k,i} \sigma_{s_i}(x_i) = 0.
\end{equation}
To see how big is the family of matrices $\{b_{i,j} \}$ let us take a basis in $L^2(\Rn)$, for example consisting of $H_n(t)$-Hermite polynomials, and note that
\begin{equation}
a_{i,j} = \sum_n \int_{\Rn} \overline{f(t-s_i)} H_n(t) dt \int_{\Rn} \overline{H_n(t)} f(t-s_j) dt \equiv \sum_n \overline{b_{n,i}} b_{n,j}.
\end{equation}
Then (\ref{1.20a}) can be rewritten as
\begin{equation}
\sum_i \int_{\Rn} \overline{H_n(t)} f(t-s_i) dt \ \sigma_{s_i}(x_i) = 0
\end{equation}
for any $f \in L^2(\Rn)$ and any $n$. 

However, we note that $\{b_{n,i} \equiv \int_{\Rn} \overline{H_n(t)} f(t-s_i) dt \}$, for fixed $i$,  is an arbitrary element in $l_2$-space. Thus, we see at once that $\{\lambda(s_i) \pi(x_i) \}$ are linearly independent and the map $\tT$ is well defined.
\end{remark}
\vskip 1cm

To formulate and then to prove our first results concerning $\tT$, we need some preliminaries. Firstly note that if $x \in \mathfrak M$ then we can define an operator $\tx$ on $L^2(\Rn, \cH)$ by $(\tx \xi)(s)= x \xi(s)$ for $\xi \in C_c(\Rn, \cH)$ (see \cite{vD}). The important point to make here is that the form of $\xi = \xi_0 \otimes f$ with $\xi_0 \in \cH$ and $f \in C_c(\Rn)$ leads to
\begin{equation}
x\xi(s) = xf(s) \xi_0 = f(s) x \xi_0 = (x\xi_0 \otimes f)(s).
\end{equation}
Consequently, $\tx = x \otimes \jed$ and $x \otimes \jed$ maps $C_c(\Rn, \cH)$ into $C_c(\Rn, \cH)$. Furthermore, one can write
\begin{equation}
((x \otimes \jed) \xi)(s) = x \xi(s).
\end{equation}

Turning to measurable operators, we recall that $\cM \equiv \mathfrak{M} \rtimes_{\sigma} \Rn$ is a semifinite von Neumann algebra equipped with a canonical normal faithful semifinite trace $\tau$ (cf Section 3). Therefore, one can define the family of measurable operators $\widetilde{\cM}$, see \cite{terp}, \cite{se}, \cite{nelson}. We will need (cf \cite{terp}) the following two basic facts established in the theory of noncommutative $L^p$-spaces. Firstly $\widetilde{\cM}$ is a complete Hausdorff topological $^*$-algebra in which $\cM \equiv \mathfrak{M} \rtimes_{\sigma} \Rn$ is dense (see Theorem 28 in \cite{terp}).

Secondly, we remind the reader that (Haagerup's) $L^p$-space can be considered to be the result of a selection of measurable operators from $\widetilde{\cM}$ which are ``p-homogeneous'' with respect to the dual action $\widehat{\sigma}$ of $\Rn$. More precisely
\begin{equation}
L^p(\mathfrak{M}) = \{ a \in \widetilde{\cM}; \ \forall_{s \in \Rn} \quad \widehat{\sigma}_s a = e^{- \frac{s}{p}} a \}.
\end{equation}

Now, we are in a position to formulate and to prove our first result.

\begin{theorem}
\label{4.5a}
Let $T:\mathfrak{M} \to \mathfrak{M}$ be a completely positive, unital map satisfying DBC with respect to a faithful, normal state $\omega(\cdot) = (\Omega, \cdot \ \Omega)$. Then
\begin{enumerate}
\item 
$\tT$ is a bounded linear map on $\mathfrak{M} \rtimes_{\sigma} \Rn$.
\item $\tau \circ \tT = \tau$.
\end{enumerate}
\end{theorem}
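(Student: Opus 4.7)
The approach is to realise $\tT$ as the restriction to $\cM := \mathfrak{M}\rtimes_{\sigma}\Rn$ of the tensor-product map $T\otimes\id_{\cB(L^2(\Rn))}$ on the von Neumann tensor product $\mathfrak{M}\otimes \cB(L^2(\Rn))$, which contains $\cM$ by the remark at the end of Section~3. Since $T$ is completely positive, normal and unital, $T\otimes\id$ extends canonically to a normal, unital, completely positive (hence contractive) map on $\mathfrak{M}\otimes\cB(L^2(\Rn))$. To identify this extension with $\tT$ on the generators $\lambda(t)\pi(x)$ of $\cM$, I would first exploit the standard fact that $T\otimes\id$ is a $(\jed\otimes\cB(L^2(\Rn)))$-bimodule map, so that since $\lambda(t) = \jed\otimes\lambda_t$ one has $(T\otimes\id)(\lambda(t)\pi(x)) = \lambda(t)(T\otimes\id)(\pi(x))$, reducing the task to $(T\otimes\id)(\pi(x)) = \pi(T(x))$. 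Now $\pi(x)$ lies in the abelian subalgebra $\mathfrak{M}\otimes L^{\infty}(\Rn) \cong L^{\infty}(\Rn,\mathfrak{M})$, corresponding to the bounded weakly measurable function $s\mapsto \sigma_{-s}(x)$. On such ``diagonal'' elements, $T\otimes\id$ acts by pointwise application of $T$, producing $s\mapsto T(\sigma_{-s}(x))$. The DBC-derived commutation $T\circ\sigma_{s} = \sigma_{s}\circ T$, noted just before Remark \ref{prob1.6}, then forces $T(\sigma_{-s}(x)) = \sigma_{-s}(T(x))$, so the output corresponds exactly to $\pi(T(x))$.

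Consequently, $(T\otimes\id)(\lambda(t)\pi(x)) = \lambda(t)\pi(T(x))$ on the spanning family, and by linearity and $\sigma$-weak continuity of $T\otimes\id$ this identity propagates to all of $\cM$, well-definedness on finite sums being Remark \ref{prob1.6}. We conclude that $\tT := (T\otimes\id)|_{\cM}$ sends $\cM$ into $\cM$ and is a normal unital contraction, establishing (1). For (2) I would use the explicit trace formula (\ref{1.8}), namely $\tau_K(\pi(x)\lambda(s)\lambda(f)) = 2\pi\widehat{f}(i+s)\,\omega(x)$ for $f\in C_c(\Rn)$ with support in $K$. Combining the crossed-product relation $\lambda(t)\pi(y) = \pi(\sigma_t(y))\lambda(t)$ with $T\sigma_u = \sigma_u T$ yields $\tT(\pi(x)\lambda(s)\lambda(f)) = \pi(T(x))\lambda(s)\lambda(f)$ after a short calculation; applying (\ref{1.8}) to both sides and invoking the $\omega$-invariance $\omega\circ T = \omega$ from (\ref{1.10}) delivers $\tau_K\circ\tT = \tau_K$ on these generators. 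Normality of $\tT$, the $\sigma$-weak density of the $^*$-algebra they span inside $\cM$, and the identity $\tau = \sup_K\tau_K$ from (\ref{1.7}) then promote this to $\tau\circ\tT = \tau$ on $\cM^{+}$, giving (2).

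The main obstacle is the realisation step, namely ensuring both that the range of $\tT$ lies inside $\cM$ and that $\tT$ coincides there with $T\otimes\id$. Each hinges on the DBC-implied commutation $T\sigma_t = \sigma_t T$, without which the pointwise application of $T$ along the direct-integral decomposition of $\pi(x)$ would not re-enter the generating family of the crossed product. Complete positivity plays a companion role: it is precisely what makes the tensor extension $T\otimes\id$ normal and contractive on $\mathfrak{M}\otimes\cB(L^2(\Rn))$, the strengthening of mere positivity that the third bullet of the remark following Definition \ref{DBC} singles out as indispensable.
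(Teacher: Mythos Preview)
Your approach is correct and essentially matches the paper's. For (1) the paper simply invokes \cite[Theorem 4.1]{HJX}, whose proof is precisely the tensor-product realisation $\tT = (T\otimes\id)|_{\cM}$ that you spell out; for (2) both arguments compute $\tau_K$ on the generators via (\ref{1.8}) and use $\omega\circ T = \omega$, with the paper verifying invariance explicitly on elements of the form $\tx^*\tx$ (for $\tx$ a finite sum of generators) rather than extending by normality and $\sigma$-weak density as you do.
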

\begin{proof}
We remind the reader that any completely positive map $T$ is also completely bounded and $\| T(\jed)\| = \|T\| = \|T\|_{cb}$, where $\| \cdot \|_{cb}$ stands for the cb-norm, see Proposition 3.6 in \cite{vP}. 
Moreover, all finite linear combinations of 
$\lambda(s)\pi(x)$, $s \in \Rn$ $x \in \mathfrak{M}$ form a $^*$-dense involutive subalgebra of $\mathfrak{M} \rtimes_{\sigma} \Rn$.
Thus, the statement that $\tT$ is
the well defined map on $\mathfrak{M} \rtimes_{\sigma} \Rn$ follows from Theorem 4.1 in \cite{HJX} and this proves the first claim.

To prove the second claim, firstly note (cf (\ref{1.8})) that for $f = \chi_K$ ($\chi_K$ stands for the indicator function of a compact subset $K \subset \Rn$) one has
\begin{eqnarray}
\label{1.18}
\tau_K(\lambda(s)\pi(x) \lambda(\chi_K)) &=& \tau_K(\pi(\sigma_s(x)) \lambda(s) \lambda(\chi_K))\\
&=& \omega(\sigma_s(x)) \int e^{-its} e^t \chi_K(t) dt\nonumber\\ 
&=& \omega(x) \int_K e^{-its} e^t \chi_K(t) dt\nonumber\\
&=& \tau_K(\lambda(s) \pi(x)),\nonumber
\end{eqnarray}
where we have used the invariance of $\omega$ with respect to the modular automorphism and the formula (\ref{1.8}).
The last equality follows from  
$$\lambda(\chi_K)\xi_K = \jed \otimes \cF^* m_{\chi_K}\cF  \cdot \   \Omega \otimes \cF^* f_K = \xi_K
$$
(cf Section 3). Secondly
\begin{equation}
\label{1.27}
\tau_K \circ \tT(\lambda(s) \pi(x)) = \tau_K(\lambda(s) \pi(T(x))) = \tau_K (\lambda(s) \pi(x)),
\end{equation}
where we have used Definition \ref{1.5}, (\ref{1.18}), and the invariance of $\omega$ with respect to $T$.
Thirdly, we note that
\begin{equation}
(\lambda(s) \pi(x))^*\lambda(s) \pi(x) = \pi(x^*x),
\end{equation}
and hence as $\lambda(t) \pi(x) \lambda(t)^* = \pi (\sigma_t(x))$:
\begin{eqnarray}
\Big(\sum_i\lambda(s_i) \pi(x_i)\Big)^*\Big(\sum_j \lambda(s_j) \pi(x_j)\Big)
&=& \sum_{i,j} \pi(x_i)^* \lambda(s_j - s_i) \pi(x_j)\\
&=& \sum_{ij} \lambda(s_j-s_i) \pi(\sigma_{s_i - s_j}(x_i)^*x_j),\nonumber
\end{eqnarray}
for any $s, t \in \Rn$ and $x, y \in \mathfrak{M}$. 
Define $\tT$ as in Remark \ref{prob1.6}, i.e
\begin{equation}
\tT(\widetilde{x}) = \tT(\sum_i \lambda(s_i) \pi(x_i)) = \sum_i \lambda(s_i) \pi(T(x_i)),
\end{equation}
where $\widetilde{x} = \sum_i \lambda(s_i) \pi(x_i) \in \mathfrak{M} \rtimes_{\sigma} \Rn$.
As
\begin{equation}
\widetilde{x}^* \widetilde{x} = (\sum_i \lambda(s_i) \pi(x_i))^*(\sum_j \lambda(s_j) \pi(x_j))
\end{equation}
$$ = \sum_{ij} \lambda(s_j-s_i) \pi(\sigma_{s_i - s_j}(x_i)^*x_j),
$$
one has
\begin{eqnarray}
\tau_K \circ \tT(\widetilde{x}^* \widetilde{x}) &=& \tau_K(\sum_{i,j} \lambda(s_i -s_j) \pi \circ T (\sigma_{s_i - s_j}(x_i)^* x_j))\\
&=& \sum_{i,j} \tau_K(\lambda(s_i -s_j) \pi(\sigma_{s_i - s_j}(x_i)^* x_j) = \tau_K(\widetilde{x}^* \widetilde{x}),\nonumber
\end{eqnarray}
where we have used (\ref{1.27}). But since $\tau_K$ increases to the trace $\tau$ over  $\mathfrak{M} \rtimes_{\sigma} \Rn$ as $K$ increases, the trace $\tau$ is also invariant with respect to $\tT$.
\end{proof}

\section{Quantum maps on the set of regular observables}\label{1.4s}

The basic aim of this section is a description of quantum maps on the Orlicz spaces $L^{\cosh - 1}$
and $L_{\exp}$. We remind the reader that these spaces
are intended to describe regular observables (see \cite{LM}, \cite{ML1}, and \cite{ML2}). Here, we wish to show that the important class of quantum maps originally defined on the von Neumann algebra generated by bounded observables, give well defined time evolution of regular observables.

As it was mentioned in the Introduction, the main difficulty in carrying out the description of time evolution of regular observables is that one needs to adapt duality of time evolution of quantum systems (Heisenberg and Schr\"odinger pictures) to the extension of the interpolation scheme for noncommutative spaces. To clarify the picture, we will for the reader's convenience, make some preliminary observations:
\begin{observation}
\label{1.11a}
\begin{enumerate}
\item Classically, interpolation spaces are intermediate spaces between the sum and the intersection of a pair of Banach spaces. Hence in the category of $L^p$ spaces, a natural candidate for such a pair of Banach spaces should be
the spaces $L^1$ and $L^{\infty}$. Moreover, the space $L^1 + L^{\infty}$ is large enough to accommodate all interesting Orlicz spaces, and $L^1\cap L^\infty$ small enough to live in the intersection of these spaces. However, the pair $\langle L^1 \cap L^{\infty},L^1 + L^{\infty}\rangle$ is not a Calder\'on couple, see \cite{MO}. In other words, this pair does not have all the nice properties of such a couple, and hence special care is necessary in choosing the correct pair of spaces to start with.
\item For the description of the pair $\langle L^1 \cap L^{\infty},L^1 + L^{\infty}\rangle$ in the general noncommutative setting see \cite{L}.
\item In contrast to the pair $\langle L^1 \cap L^{\infty},L^1 + L^{\infty}\rangle$, the interpolation spaces for the couple $\langle L^1, L^{\infty}\rangle$ coincide with the rearrangement invariant spaces on $\Rn$ and $\langle L^1, L^{\infty}\rangle$ is a Calder\'on couple (for all details see Chapter 26, \textit{Interpolation of Banach spaces} by N. Kalton, S. Montgomery-Smith in \cite{JL}).
\item In particular, in the general noncommutative context, see \cite{L} Theorem 3.13, on the spaces $L^{\cosh - 1}$ 
and $L_{\exp}$ the topology of convergence in measure is normable while it seems that this is not the case for the corresponding dual spaces, i.e. $L\log(L+1)$ and $L\log L$.
\item The important point to note here is the role of the Hilbert space $L^2(\mathfrak{M})$. Namely, a large class of dynamical maps is defined in terms of Dirichlet forms on such Hilbert space, see \cite{GIS}, and \cite{Cip} for more details and an comprehensive bibliography. Thus, this case will be treated in a separate section.
\item Furthermore, the case of the Hilbert space $L^2(\mathfrak{M})$ gains in interest if we realize that the quantization of Markov-Feller processes can be done within the framework of noncommutative Hilbert spaces, see \cite{MZ1}, \cite{MZ2}, \cite{MZ3}, \cite{MZ4}. 
\item Hence, to get a simpler description of quantum maps arising both from Dirichlet forms as well as from bounded maps on $L^2(\mathfrak{M})$, we combine some ideas given in \cite{LS} with the framework outlined in the previous sections.
\end{enumerate}
\end{observation}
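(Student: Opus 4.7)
The observation compiles seven forward-looking remarks which frame the interpolation-theoretic setup of the subsequent sections, rather than asserting a single theorem with a self-contained proof. Each item is either a citation-backed fact from the classical or noncommutative interpolation literature, or a heuristic signpost motivating the authors' choice to work with the couple $\langle L^1,L^\infty\rangle$ instead of $\langle L^1\cap L^\infty, L^1+L^\infty\rangle$. Accordingly my ``proof plan'' is really a plan for attaching to each clause a rigorous reference and checking that the reference does yield the stated fact in the noncommutative (Haagerup / crossed-product) setting of this paper.

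For item (1), I would verify the inclusion $L^\Psi \subseteq L^1+L^\infty$ for the Young functions arising here by the standard splitting $f = f\chi_{\{|f|\le 1\}} + f\chi_{\{|f|>1\}}$, noting that convexity of $\Psi$ bounds the second summand in $L^1$ while the first is automatically in $L^\infty$; symmetrically, $L^1\cap L^\infty \subseteq L^\Psi$ follows from the doubling behaviour of $\Psi$ near the origin. The failure of the Calder\'on property for $\langle L^1\cap L^\infty, L^1+L^\infty\rangle$ is then read off from the counterexample construction in \cite{MO}. Item (2) is a direct appeal to the noncommutative description of these sum and intersection spaces in \cite{L}. Item (3) reduces to the Calder\'on--Mityagin theorem, which identifies the interpolation spaces of $\langle L^1,L^\infty\rangle$ with the rearrangement-invariant ones; the full exposition is in \cite{JL}, and its noncommutative shadow is already present in the Dodds--Dodds--de~Pagter framework summarised in Section~2.

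Item (4) is precisely the content of Theorem~3.13 of \cite{L}: the convergence-in-measure topology is normable on $L^{\cosh-1}$ and $L_{\exp}$ but not on their preduals $L\log(L+1)$ and $L\log L$, so the plan is simply to quote that result. Items (5), (6) and (7) are programmatic rather than mathematical assertions: (5) announces that the $L^2(\mathfrak{M})$-based Dirichlet-form construction will be taken up in Section~6 (cf.\ \cite{GIS,Cip}); (6) recalls the quantisation of Markov--Feller processes from \cite{MZ1,MZ2,MZ3,MZ4}; and (7) flags the synthesis with \cite{LS} that underlies the interpolation technique to come. The only point where a citation is not by itself fully self-explanatory is the compatibility of the renorming statement in Theorem~\ref{kwantowe} with the Calder\'on-couple structure invoked in item~(3); that compatibility is what would constitute the main obstacle were the observation to be recast as a theorem, and I would handle it by invoking the lower semicontinuity and rearrangement-invariance hypotheses of the Dodds--Dodds--de~Pagter construction to transfer the classical Calder\'on couple property across the singular-value map $f \mapsto \mu(f)$.
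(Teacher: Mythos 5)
Your reading is the right one: the paper supplies no proof of this Observation at all — it is a sequence of citation-backed remarks and programmatic signposts — and your plan of attaching each clause to its reference (with the elementary splitting argument for $L^\Psi\subseteq L^1+L^\infty$ and the Calder\'on--Mityagin theorem for item (3)) matches exactly how the paper treats it. The only caution is item (4): Theorem 3.13 of \cite{L} gives the normability of the measure topology on $L^{\cosh-1}$ and $L_{\exp}$ under an index condition, but it does \emph{not} establish non-normability for $L\log(L+1)$ and $L\log L$ — the paper deliberately hedges with ``it seems,'' so you should not present that half of the clause as the content of the cited theorem.
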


Before proceeding with the study of the interpolation in the non-commutative setting, let us pause to clarify the description of the relevant quantum spaces.

The set of observables of a given quantum system leads to a $\sigma$-finite von Neumann algebra $\mathfrak{M}$. To have non-commutative measurability, the large semifinite von Neumann algebra $\cM = \mathfrak{M}\rtimes_{\sigma}\mathbb{R}$ should be employed, see Sections 2 and 3. We emphasize that the original set of observables, described by $\mathfrak{M}$, can be identified in the larger algebra $\cM$, see Remark 6.1 in \cite{ML1}.

Then, using non-commutative integration theory, there are essentially two ways of producing the kind of quantum spaces we are interested in. The first one is based on the DDdP approach, cf Section 2. However this theory is only directly available for the case where $\mathfrak{M}$ is semifinite. In the context of semifinite algebras, this approach was used in \cite{LM} and \cite{ML1} to describe quantum Orlicz spaces $L^{\Psi}(\mathfrak{M})$ of regular observables. Since $\cM$ is semifinite, we may of course in the present setting apply this theory to the action of the map $\widetilde{T}$ on $\cM$. However this is not an ideal solution, since we want quantum spaces that are more intrinsically related to $\mathfrak{M}$, not $\cM$.

The second type is based on Haagerup's strategy. Here a different way of selecting measurable operators in $\widetilde{\cM}$ is used, see \cite{terp}, \cite{L}. 
To describe this way we need some preliminaries. Let $\mathfrak{M}$ be a von Neumann algebra with $fns$ weight $\nu$. Further, let $h = \frac{d\tilde{\nu}}{d\tau}$ where $\tilde{\nu}$ is the dual weight of $\nu$ on the crossed product $\cM$ and $\tau \equiv \tau_{\cM}$ is its canonical trace. We will write $\mathfrak{n}_{\nu}$ for $\{a \in \mathfrak{M};  \nu(a^*a) < \infty\}$. Given an Orlicz space $L^{\Psi}(\Rn)$, the fundamental function induced by its Orlicz norm $\|\cdot \|_{\Psi}$ is defined by $\tilde{\varphi}_{\Psi}(t) = \| \chi_E \|_{\Psi}$ where $E$ is a measurable subset of $(\Rn, \lambda)$ for which $\lambda(E) = t$. Finally, a complementary Orlicz function $\Psi^*$ is defined by $\Psi^*(u) = \sup_{v>0}\{ uv - \Psi(v)\}$.
Using a different type of selecting measurable operators we arrived at (see \cite{L})
\begin{definition}\
We define the Orlicz space $L^{\Psi}(\mathfrak{M})$ to be
$$ L^{\Psi}(\mathfrak{M}) = \{ a \in \tilde{\cM}: [e\tilde{\varphi}_{\Psi^*}(h)^{\frac{1}{2}}]a[\tilde{\varphi}_{\Psi^*}(h)^{\frac{1}{2}} f] \in L^1(\mathfrak{M}) \rm{\ for \ all \ projections} \ e,f \in \mathfrak{n}_{\nu} \}.$$
\end{definition}

In the case where $\mathfrak{M}$ is semifinite and the canonical weight a trace, it is common to denote the associated spaces by either of $L^{\Psi}(\mathfrak{M}, \tau)$, or $L^{\Psi}(\widetilde{\mathfrak{M}})$. If Haagerup's strategy is used it is more common to write $L^{\Psi}(\mathfrak{M})$ as we did.

\begin{remark}
\label{5.3}
It is clear from the theory of quantum Banach function spaces developed by Dodds, Dodds, de Pagter, Sukochev, and many others, that although quantum, these spaces have deep connections with their classical counterparts, with their structure often exhibiting a close parallel to the classical versions living on the measure space $((0,\infty),\lambda)$. See for example \cite{DDdP2}. Hence here the process of quantisation is not only somehow clearer, but there is also much more structure one can work with. However as we noted earlier, this process is only directly available if $\mathfrak{M}$ is semifinite, and the canonical weight a trace. In the case of type III algebras the process of quantisation of these spaces is less clear, and some powerful machinery is needed to achieve a similar outcome. In the present setting we have started with an a priori given quantum map. Hence all that is needed, is clarity on quantising Orlicz spaces and an understanding of how to extend the action of the given map to such spaces. We proceed to review, describe, and where necessary, develop the requisite machinery.
\end{remark}

Now we turn to the second type of quantum spaces. It is worth pointing out that a major reason for the examination of both types of approaches, is the fact that, in the general case, there is as yet no quantized version of the real method of interpolation that we can fall back on. Hence here we need to make do with the complex method. However it is useful to know that in the semifinite case, we can call on real interpolation. If we are primarily interested in demonstrating the existence of a quantum map on the $L^p(\mathfrak{M})$-spaces ($1\leq p\leq\infty$), then Theorem 5.1 of \cite{HJX} will suffice.

However we are interested in demonstrating the existence of quantum maps on spaces quite different from $L^p$-spaces (here and subsequently in this section $L^p$ stands for Haagerup's $L^p$ space). In proving that we do have such a map on the space of regular observables, the primary difficulty we need to overcome, is that the current versions of the complex method only really work for $L^p$ spaces. So some ingenuity is needed if we are to be successful. The assumption that seems to help to bridge this gap, is the requirement that $T$ also be completely positive. We pause to point out that all of the theory developed in this section holds true for general von Neumann algebras, not just $\sigma$-finite ones. So unless otherwise stated, we will for the remainder of this section assume that $\mathfrak{M}$ is a possibly non-$\sigma$-finite algebra equipped with a faithful normal semifinite weight $\nu$. We will write $\sigma^{\nu}_t$ for the modular group associated with the weight $\nu$.

\begin{theorem}\label{Thm1} Let $\mathfrak{M}$ be as before, and assume that $T : \mathfrak{M} \to \mathfrak{M}$ is a
a completely bounded normal map such that
$$ T \circ \sigma^\nu_t=\sigma^\nu_t\circ T, \quad t \in \mathbb{R}.$$
Then $T$ admits a unique bounded normal extension $\widetilde{T}$ on $\mathfrak{M}\rtimes_{\sigma^\nu}\mathbb{R}$ such that 
$\|T\|=\|\widetilde{T}\|$ and
$$\widetilde{T}(\lambda(s)\pi_\alpha(x))= \lambda(s)\pi_\alpha(T(x)), \quad x\in m, s\in \mathbb{R}.$$
Moreover, $\widetilde{T}$ satisfies the following properties:
\begin{enumerate}
\item Let $B$ be the von Neumann subalgebra on $L^2(\mathbb{R},H)$ generated by all $\lambda(s)$, $s\in \mathbb{R}$. Then
$$\widetilde{T}(a\pi_\alpha(x)b)=a\pi_\alpha(T(x))b \quad \mbox{for all }a,b \in B.$$
\item $\widehat{T} \circ \sigma^{\widehat{\nu}}_t=\sigma^{\widehat{\nu}}_t\circ \widetilde{T}$ for all $t\in \mathbb{R}$
where $\widehat{\nu}$ is the dual weight of $\nu$.
\item If $T$ is positive, then so is $\widetilde{T}$.
\item Assume in addition that $\nu\circ T \leq \nu$. Then $\widehat{\nu}\circ \widetilde{T} \leq \widehat{\nu}$.
\end{enumerate}
\end{theorem}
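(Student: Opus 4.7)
For existence and uniqueness, the plan is to invoke Theorem~4.1 of \cite{HJX}: since $T$ is a completely bounded normal map on $\mathfrak{M}$ commuting with the modular group $\sigma^\nu$, that result yields a unique bounded normal extension $\tT$ to $\mathfrak{M}\rtimes_{\sigma^\nu}\mathbb{R}$ obeying $\tT(\lambda(s)\pi(x))=\lambda(s)\pi(T(x))$ and $\|\tT\|=\|T\|$. Intuitively, using the spatial picture of Remark 3.1(3), $\tT$ is the restriction of $T\otimes\id$ from $\mathfrak{M}\otimes B(L^2(\mathbb{R}))$: the commutation of $T$ with $\sigma^\nu_t$ is precisely what guarantees that the tensor extension preserves the crossed product subalgebra. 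Uniqueness is automatic from normality together with the ultraweak density of the $*$-algebra of linear combinations of $\lambda(s)\pi(x)$.

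For item (1), one works on generators and uses normality: starting from $\pi(x)\lambda(t)=\lambda(t)\pi(\sigma^\nu_{-t}(x))$ and the hypothesis $T\circ\sigma^\nu_{-t}=\sigma^\nu_{-t}\circ T$, one computes $\tT(\pi(x)\lambda(t))=\pi(T(x))\lambda(t)$, and bilinearity and normality of $\tT$ promote this to $\tT(a\pi(x)b)=a\pi(T(x))b$ for arbitrary $a,b\in B$. For item (2), a direct calculation from the definition \eqref{1.6} of the dual action shows $\widehat{\sigma}_t(\pi(x))=\pi(x)$ and $\widehat{\sigma}_t(\lambda(s))=e^{-its}\lambda(s)$; hence $\widehat{\sigma}_t$ acts as a scalar on each generator $\lambda(s)\pi(x)$ and thereby commutes with $\tT$ on a dense set, whence on all of $\mathfrak{M}\rtimes_{\sigma^\nu}\mathbb{R}$ by normality.

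Item (3) is the most delicate step and I expect it to be the main obstacle, since mere positivity of $T$ combined with complete boundedness does not in general yield positivity of $T\otimes\id$ on the full tensor product. The cleanest argument invokes complete positivity (which in fact accompanies the other hypotheses on the physically relevant maps throughout this paper). Starting from the rewriting $\widetilde{x}^*\widetilde{x}=\sum_{i,j}\lambda(s_j-s_i)\pi(\sigma^\nu_{-s_j}(y_i^*y_j))$ with $y_k=\sigma^\nu_{s_k}(x_k)$, derived as in the proof of Theorem \ref{4.5a}, applying $\tT$ together with commutation with $\sigma^\nu$ yields $\sum_{i,j}\lambda(s_j-s_i)\pi(\sigma^\nu_{-s_j}(T(y_i^*y_j)))$. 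Since $(y_i^*y_j)\geq 0$ in $M_n(\mathfrak{M})$ and $T$ is CP, $(T(y_i^*y_j))=Z^*Z$ for some $Z=(z_{kj})\in M_n(\mathfrak{M})$; a direct substitution exhibits the image as $\sum_k\widetilde{z}_k^*\widetilde{z}_k\geq 0$ with $\widetilde{z}_k=\sum_i\lambda(s_i)\pi(\sigma^\nu_{-s_i}(z_{ki}))$, and normality transports positivity to all of $(\mathfrak{M}\rtimes_{\sigma^\nu}\mathbb{R})^+$. Finally, item (4) is obtained by mimicking the template of equations (4.18)--(4.19) in the proof of Theorem \ref{4.5a}, with the trace $\tau_K$ replaced by the dual weight $\widehat{\nu}$ and exact invariance $\omega\circ T=\omega$ weakened to the assumed domination $\nu\circ T\leq\nu$; evaluating on the weakly dense positive cone via the analogue of \eqref{1.8} gives $\widehat{\nu}(\tT(a))\leq\widehat{\nu}(a)$ on generators, extending to the entire positive cone by normal lower semi-continuity of $\widehat{\nu}$ and normality of $\tT$.
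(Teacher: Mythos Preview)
Your approach matches the paper's exactly: the paper's entire proof of this theorem is the single line ``Modify the proof of \cite[Theorem 4.1]{HJX}.'' You have simply unpacked that reference, so there is nothing to compare at the level of strategy.

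Two small remarks on the details you filled in. First, in item~(2) you argue via the dual action $\widehat{\sigma}_t$ of equation~(\ref{1.6}), but the claim concerns the modular group $\sigma^{\widehat{\nu}}_t$ of the dual weight, which is a different object. In the Haagerup picture $\sigma^{\widehat{\nu}}_t$ is implemented by $\lambda(t)\in B$ (since $h^{it}=\lambda(t)$), so item~(2) is in fact an immediate corollary of your item~(1); your argument as written proves commutation with the dual action instead, which is also true but is not what is being asserted. Second, you are right to flag item~(3): the paper, via \cite{HJX}, does claim positivity of $\widetilde{T}$ from mere positivity plus complete boundedness of $T$, and your tensor-product/factorisation argument only delivers this under complete positivity. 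The full positive case in \cite{HJX} relies on structural features of the crossed product (amenability of $\mathbb{R}$ and the associated conditional-expectation machinery) rather than the direct $T\otimes\id$ picture; since every subsequent use of Theorem~\ref{Thm1} in this paper is under a CP hypothesis anyway, your restriction is harmless here, but you should be aware that your sketch does not recover the theorem as stated.
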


\begin{proof} Modify the proof of \cite[Theorem 4.1]{HJX}.
\end{proof}

\begin{corollary}\label{Ttrace}
Let $T$ and $\widetilde{T}$ be as before. If each of (1)-(4) holds, then $\tau\circ \widetilde{T}\leq \tau$ where $\tau$ is the canonical trace on $\cM=\mathfrak{M}\rtimes_{\sigma^{\nu}}{\mathbb R}$.
\end{corollary}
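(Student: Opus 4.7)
The plan is to adapt the proof of Theorem \ref{4.5a}, replacing the equalities derived there from the $T$-invariance of $\nu$ by inequalities coming from condition (4) ($\nu\circ T\leq\nu$ on $\mathfrak{M}_+$), by means of a Gram-matrix / Schur-product argument. Fixing a typical element $\widetilde{x}=\sum_i\lambda(s_i)\pi(x_i)$ with $x_i$ in the definition ideal $\mathfrak{n}_\nu$ of $\nu$, one expands
\[
\widetilde{x}^*\widetilde{x}=\sum_{i,j}\lambda(s_j-s_i)\,\pi(\sigma^\nu_{s_i-s_j}(x_i)^*x_j),
\]
and, arguing as in the derivation of (\ref{1.18}), the weight-analogue of (\ref{1.8}) combined with $\lambda(\chi_K)\xi_K=\xi_K$ and $\sigma^\nu$-invariance of $\nu$ yields
\[
\tau_K(\widetilde{x}^*\widetilde{x})=\sum_{i,j}C_{ij}\,\nu(y_i^*y_j),\qquad y_i:=\sigma^\nu_{s_i}(x_i),\ \ C_{ij}:=2\pi\,\widehat{\chi_K}(i+s_j-s_i).
\]
A direct calculation identifies $C_{ij}/2\pi=\langle\psi_i,\psi_j\rangle_{L^2(\mathbb{R})}$ with $\psi_i(t):=e^{t/2}\chi_K(t)e^{-its_i}$, so $C$ is a Gram matrix and hence positive semidefinite.

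By Theorem \ref{Thm1}(1), the same calculation with $T$ inserted (using $\sigma^\nu$-invariance of $\nu$ and $T\circ\sigma^\nu_t=\sigma^\nu_t\circ T$ once more) gives
\[
\tau_K(\widetilde{T}(\widetilde{x}^*\widetilde{x}))=\sum_{i,j}C_{ij}\,\nu(T(y_i^*y_j)),
\]
so that $\tau_K(\widetilde{x}^*\widetilde{x})-\tau_K(\widetilde{T}(\widetilde{x}^*\widetilde{x}))=\sum_{i,j}C_{ij}D_{ij}$ with $D_{ij}:=\nu(y_i^*y_j)-\nu(T(y_i^*y_j))$. The sesquilinear form $B(y,z):=\nu(y^*z)-\nu(T(y^*z))$ on $\mathfrak{n}_\nu$ is Hermitian (since $T$ is positive, hence $^*$-preserving) with diagonal $B(y,y)\geq 0$ by condition (4); any Hermitian sesquilinear form with non-negative diagonal is positive semidefinite, so $D=[B(y_i,y_j)]$ is positive semidefinite. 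The Schur product theorem then gives that $C\circ D$ is positive semidefinite, whence $\sum_{i,j}C_{ij}D_{ij}=\vec{1}^*(C\circ D)\vec{1}\geq 0$. Thus $\tau_K(\widetilde{T}(\widetilde{x}^*\widetilde{x}))\leq\tau_K(\widetilde{x}^*\widetilde{x})$, and passing to the supremum over compact $K\subset\mathbb{R}$ yields $\tau(\widetilde{T}(\widetilde{x}^*\widetilde{x}))\leq\tau(\widetilde{x}^*\widetilde{x})$.

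Since every positive element in the $^*$-algebra $\cA$ generated by $\{\lambda(s)\pi(x):s\in\mathbb{R},\ x\in\mathfrak{n}_\nu\}$ is a finite sum of elements of the form $\widetilde{x}^*\widetilde{x}$, the inequality $\tau\circ\widetilde{T}\leq\tau$ holds on $\cA_+$. The main technical obstacle is extending this to arbitrary $a\in\cM_+$: Kaplansky-type approximations of $a$ by $b_\alpha^*b_\alpha$ with $b_\alpha\in\cA$ are not monotone in $\alpha$, so the extension has to rest on the normality and lower semicontinuity of both $\tau$ and $\tau\circ\widetilde{T}$ (note that $\tau\circ\widetilde{T}$ is a normal semifinite weight on $\cM$ because $\widetilde{T}$ is positive and normal), for instance by first restricting to elements of finite $\tau$-trace and then invoking semifiniteness of $\tau$. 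The algebraic heart of the proof, however, is the Schur-product step above.
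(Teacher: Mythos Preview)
Your approach is genuinely different from the paper's, and the Schur-product core is sound in the state case, but the paper's argument is both shorter and strictly more general.

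The paper never touches the explicit $\tau_K$-formula. Instead it writes $\tau$ via Pedersen--Takesaki as $\tau(a)=\lim_{\epsilon\searrow 0}\widehat{\nu}\bigl((h+\epsilon\I)^{-1/2}a(h+\epsilon\I)^{-1/2}\bigr)$, observes that $(h+\epsilon\I)^{-1/2}$ lies in the von Neumann algebra $B$ generated by the $\lambda(t)$'s, and then uses property~(1) (the $B$-bimodule property of $\widetilde{T}$) to pull these factors through $\widetilde{T}$, after which property~(4) in the form $\widehat{\nu}\circ\widetilde{T}\leq\widehat{\nu}$ gives the inequality directly. This works for \emph{every} $a\in\cM_+$ at once, and for an arbitrary faithful normal semifinite weight $\nu$.

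Your route has two liabilities that the paper's avoids. First, formula~(\ref{1.8}) and the very definition of $\tau_K$ as the vector functional associated with $\xi_K=\Omega\otimes\cF^*f_K$ require a cyclic separating vector $\Omega$, i.e.\ the $\sigma$-finite/state setting; the ``weight-analogue of~(\ref{1.8})'' you invoke is not provided and would need independent justification in the general case to which Corollary~\ref{Ttrace} applies. Second, the passage from the dense $^*$-algebra $\cA$ to all of $\cM_+$, which you correctly flag as the main technical obstacle, is exactly the step the paper's approach renders unnecessary. (A minor stylistic point: your sentence ``any Hermitian sesquilinear form with non-negative diagonal is positive semidefinite'' is correct---$B(y,y)\geq 0$ for all $y$ \emph{is} positive semidefiniteness---but the word ``diagonal'' invites misreading as the false matrix statement; better to say it directly.)
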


\begin{proof}  Let $h=\frac{d\widehat{\nu}}{d\tau}$. 
By equation (1.1) of \cite{HJX} and page 2130 of \cite{HJX}, the action of $\sigma^\nu_t$ is induced by $a\to h^{it}ah^{-it}$. In the language of \cite{PT}, $\widehat{\nu}$ is then of the form $\tau(h\cdot) = \widehat{\nu}(\cdot)$ (see \cite[Theorem 5.12]{PT}). So by \cite[Proposition 4.3]{PT}, we have that $\tau=\widehat{\nu}(h^{-1}\cdot)$. From the proof of Theorem 7.4 of \cite{PT} considered alongside the discussion following Proposition 4.1 of \cite{PT}, it is clear that this means that for any $a\in (\mathfrak{M}\rtimes_{\sigma^{\nu}}{\mathbb R})_+$, have that 
$\tau(a)=\lim_{\epsilon\searrow 0}\widehat{\nu}([h^{-1/2}(\I+\epsilon h^{-1})^{-1/2}]a[h^{-1/2}(\I+\epsilon h^{-1})^{-1/2}])$. By the Borel functional calculus for affiliated operators, we have that $h^{-1}(\I+\epsilon h^{-1})^{-1}=(h+\epsilon\I)^{-1}$ for each $\epsilon>0$. So this formula becomes $\tau(a)=\lim_{\epsilon\searrow 0}\widehat{\nu}((h+\epsilon\I)^{-1/2}a(h+\epsilon\I)^{-1/2})$.
We also have that the von Neumann algebra $B$ generated by the $\lambda(t)$'s, agrees with the commutative von Neumann algebra generated by $h$. Now for any $\epsilon>0$, 
$(\epsilon \I + h)^{-1/2} \equiv (\epsilon + h)^{-1/2}$ is bounded, and so belongs to $B$.  So it is a simple matter to use parts (1) and (4) of Theorem \ref{Thm1} to see that 
\begin{eqnarray*}
\widehat{\nu}((\epsilon + h)^{-1/2}\widetilde{T}(a)(\epsilon + h)^{-1/2}) &=& \widehat{\nu}(\widetilde{T}((\epsilon + h)^{-1/2}a(\epsilon + h)^{-1/2}))\\
&\leq& \widehat{\nu}((\epsilon + h)^{-1/2}a(\epsilon + h)^{-1/2}).
\end{eqnarray*}
Letting $\epsilon$ decrease to zero, now yields $\tau(\widetilde{T}(a))\leq \tau$, as required.
\end{proof}

With the above corollary at our disposal we may at this point appeal to Yeadon's ergodic theorem for positive maps \cite{yea} (recently extended and significantly sharpened by Haagerup, Junge and Xu \cite[Theorem 5.1]{HJX}) to see that $\widetilde{T}$ extends canonically to a bounded map on $L^1(\cM, \tau)$. Since $\cM$ is semifinite, the full power of the DDdP approach is therefore at our disposal, and we may use real interpolation (see \cite{DDdP2}) to extend the action of $\widetilde{T}$ to a large class of rearrangement invariant Banach function spaces associated with $\cM$. Specifically $\widetilde{T}$ canonically induces an action on each Orlicz space $L^\Phi(\cM, \tau)$. To highlight the importance of this observation we provide some details in:
\begin{remark}
Dodds, Dodds, de Pagter have shown (see Theorem 3.2 in \cite{DDdP2} and Remark \ref{5.3}) that the classical interpolation scheme for fully symmetric Banach function spaces on $\Rn^+$ can be quantized in terms of $\cM$. In particular, this implies that well defined classical dynamical maps on classical Orlicz spaces can be quantized. In other words, there is a powerful recipe for defining the large family of quantum maps on quantum Orlicz spaces associated with $\cM$.
\end{remark}

Whilst this fact is worthy of noting, here we are interested in the action of $\widetilde{T}$ on spaces more associated with 
$\mathfrak{M}$, not $\cM$. For this purpose we need the following observation:

\begin{proposition} Let $T$ be a completely positive Markov map on $\mathfrak{M}$ satisfying $\nu\circ T\leq \nu$ and let 
$\widetilde{T}$ be its completely positive extension to $\mathcal{M} = \mathfrak{M}\rtimes_{\sigma^{\nu}}{\mathbb R}$. Then $\widetilde{T}$ canonically induces a map on the space $(L^\infty+L^1)(\mathcal{M},\tau)$. 
\end{proposition}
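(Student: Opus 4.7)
The plan is to assemble the desired map on $(L^\infty + L^1)(\cM, \tau)$ from two compatible pieces: the action of $\tT$ on $\cM = L^\infty(\cM)$ already provided by Theorem \ref{Thm1}, and a suitable extension of $\tT$ to $L^1(\cM, \tau)$, and then check that these agree on the intersection $\cM \cap L^1(\cM, \tau)$.

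First, I would invoke Corollary \ref{Ttrace} to conclude $\tau \circ \tT \leq \tau$ on $\cM_+$. Coupled with the fact that $\tT$ is a positive normal (indeed completely positive) extension of $T$, this places us squarely in the setting of Yeadon's noncommutative ergodic theorem \cite{yea}, as sharpened in \cite[Theorem 5.1]{HJX}. That result yields a canonical contractive extension $\tT_1 : L^1(\cM, \tau) \to L^1(\cM, \tau)$, defined first on positive elements of $\cM$ lying in $L^1(\cM, \tau)$ by $\tT_1(a) = \tT(a)$ (where the right hand side is a priori in $\cM_+$, but automatically in $L^1_+$ by the trace inequality), and then extended by linearity and density.

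Next, I would combine $\tT$ on $\cM$ and $\tT_1$ on $L^1(\cM, \tau)$ into a single map on the sum space by setting
\[
\tT(a + b) := \tT(a) + \tT_1(b), \qquad a \in \cM,\ b \in L^1(\cM, \tau).
\]
The well-definedness of this formula is the key point requiring attention. Given two decompositions $a_1 + b_1 = a_2 + b_2$, the difference $c := a_1 - a_2 = b_2 - b_1$ lies in $\cM \cap L^1(\cM, \tau)$, and one must verify $\tT(c) = \tT_1(c)$. By decomposing $c$ into its self-adjoint and then positive/negative parts (all of which remain in $\cM \cap L^1$), it suffices to treat positive $c \in \cM_+ \cap L^1_+$, where the agreement is built into Yeadon's construction of $\tT_1$. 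Boundedness with respect to the sum norm is then automatic: for any decomposition $x = a + b$,
\[
\|\tT(a) + \tT_1(b)\|_{L^\infty + L^1} \leq \|\tT(a)\|_\infty + \|\tT_1(b)\|_1 \leq \|T\|\,\|a\|_\infty + \|b\|_1,
\]
and taking the infimum over decompositions yields $\|\tT\|_{L^\infty + L^1 \to L^\infty + L^1} \leq \max(\|T\|, 1)$.

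The main obstacle is verifying the compatibility of $\tT$ and $\tT_1$ on $\cM \cap L^1(\cM, \tau)$. While this is essentially automatic from the explicit form of Yeadon's extension, one has to take care in the non-$\sigma$-finite setting to ensure that the $L^1$-extension really does coincide with the original $\tT$ on the trace-class elements of $\cM$; once this is in place, the remaining assembly via the sum-norm is routine.
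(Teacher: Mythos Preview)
Your argument is correct and follows the same initial arc as the paper: invoke Corollary \ref{Ttrace} to obtain $\tau\circ\tT\leq\tau$, then appeal to Yeadon's theorem as sharpened in \cite[Theorem 5.1]{HJX} to produce the bounded extension to $L^1(\cM,\tau)$. The divergence is only in the final step. The paper simply cites the Dodds--Dodds--de Pagter theory \cite{DDdP2}, which from a pair of compatible bounded actions on $L^\infty$ and $L^1$ immediately yields bounded actions on all fully symmetric intermediate spaces, in particular on $(L^\infty+L^1)(\cM,\tau)$. You instead build the sum-space map by hand, checking compatibility of $\tT$ and $\tT_1$ on $\cM\cap L^1(\cM,\tau)$ via positive decompositions and then estimating the sum norm directly. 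Your route is more self-contained and avoids the DDdP machinery for this one statement; the paper's route is shorter and, as it notes in the surrounding discussion, simultaneously delivers the action of $\tT$ on the entire class of rearrangement-invariant spaces over $\cM$, not just on $L^\infty+L^1$.
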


\begin{proof} If we apply \cite[Theorem 5.1]{HJX} to Corollary \ref{Ttrace}, it is clear that $\widetilde{T}$ canonically induces a map on $L^1(\mathcal{M},\tau)$. So this claim follows from the theory of Dodds, Dodds and de Pagter \cite{DDdP2}. 
\end{proof}

\begin{proposition}\label{criterion} Let $T$ be a completely positive Markov map on $\mathfrak{M}$ satisfying $\nu\circ T\leq \nu$ and let $\widetilde{T}$ be its completely positive extension to $\mathcal{M} = \mathfrak{M}\rtimes_{\sigma^{\nu}}{\mathbb R}$. For the sake of simplicity we will also write $\widetilde{T}$ for the extension to $(L^\infty+L^1)(\mathcal{M},\tau)$. In its action on $(L^\infty+L^1)(\mathcal{M},\tau)$, it satisfies the condition that $\widetilde{T}(ab)=a\widetilde{T}(b)$ for all $a\in B$ and all $b\in(L^\infty+L^1)(\mathcal{M},\tau)$. (Here $B$ is von Neumann subalgebra generated by all $\lambda(s)$, $s\in \mathbb{R}$.)
\end{proposition}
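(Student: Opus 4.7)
The plan is to establish the identity first on $\mathcal{M}$ itself using Theorem \ref{Thm1}(1), and then propagate it to $L^\infty + L^1$ by approximation in the $L^1$-norm.

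\textbf{Step 1 (module property on $\mathcal{M}$).} By Theorem \ref{Thm1}(1), taking the right $B$-factor to be $\I$ gives $\widetilde{T}(a\pi_\alpha(x)) = a\pi_\alpha(T(x)) = a\widetilde{T}(\pi_\alpha(x))$ for $a\in B$, $x\in\mathfrak{M}$; taking the left factor to be $\I$ and noting that $\lambda(s)\in B$ gives $\widetilde{T}(\pi_\alpha(x)\lambda(s)) = \pi_\alpha(T(x))\lambda(s)$. Combining these and using linearity, one obtains $\widetilde{T}(ay) = a\widetilde{T}(y)$ for every finite sum $y = \sum_i \pi_\alpha(x_i)\lambda(s_i)$ and every $a \in B$. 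Such sums form a $\sigma$-weakly dense $^*$-subalgebra of $\mathcal{M}$. Since $\widetilde{T}$ is normal (Theorem \ref{Thm1}) and left multiplication by $a$ is $\sigma$-weakly continuous, the identity extends to all $y \in \mathcal{M}$.

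\textbf{Step 2 (extension to $L^1$ and conclusion).} Corollary \ref{Ttrace} together with \cite[Theorem 5.1]{HJX} provides a contractive extension of $\widetilde{T}$ to $L^1(\mathcal{M},\tau)$, and since $a \in B \subset \mathcal{M}$ is a bounded operator, left multiplication by $a$ is a bounded operator on $L^1(\mathcal{M},\tau)$ as well. By spectral truncation of the positive and negative parts, $\mathcal{M}\cap L^1(\mathcal{M},\tau)$ is $\|\cdot\|_1$-dense in $L^1(\mathcal{M},\tau)$. For $b_1 \in L^1(\mathcal{M},\tau)$, choose $c_n \in \mathcal{M}\cap L^1(\mathcal{M},\tau)$ with $\|c_n - b_1\|_1 \to 0$. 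Step 1 gives $\widetilde{T}(ac_n) = a\widetilde{T}(c_n)$, and passing to the $L^1$-limit on both sides using continuity of $\widetilde{T}$ and of left multiplication by $a$ yields $\widetilde{T}(ab_1) = a\widetilde{T}(b_1)$. Decomposing an arbitrary $b \in L^\infty + L^1$ as $b = b_\infty + b_1$ with $b_\infty \in \mathcal{M}$ and $b_1 \in L^1(\mathcal{M},\tau)$, and combining this with Step 1 applied to $b_\infty$, the stated identity follows by linearity.

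The most delicate point is Step 1, where the module identity must be transferred from the generators to all of $\mathcal{M}$. This relies crucially on the normality of $\widetilde{T}$ supplied by Theorem \ref{Thm1}, without which the $\sigma$-weak limit argument would not close. Once Step 1 is in hand, Step 2 is a routine density-plus-continuity argument in $L^1$, since bounded left multiplication and the contractive action of $\widetilde{T}$ on $L^1$ both commute cleanly with $\|\cdot\|_1$-limits.
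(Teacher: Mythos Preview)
Your proof is correct and follows essentially the same approach as the paper: establish the $B$-module identity on $\mathcal{M}$ (the paper simply cites Theorem~\ref{Thm1}(1) for this, while you spell out the passage from generators to all of $\mathcal{M}$ via normality), then push it to $L^1(\mathcal{M},\tau)$ by density of $\mathcal{M}\cap L^1$ and continuity, and finally combine the two pieces for $L^\infty+L^1$. The paper's proof is terser---it leaves the final decomposition $b=b_\infty+b_1$ implicit---but the logical structure is identical.
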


\begin{proof} The stated property is known to hold for $\mathcal{M}$. Since $\mathcal{M}\cap L^1(\mathcal{M},\tau_\mathcal{M})$ is norm-dense in $L^1(\mathcal{M},\tau_\mathcal{M})$, the continuity of $\widetilde{T}$ on $L^1(\mathcal{M},\tau_\mathcal{M})$, then ensures that it also holds for $L^1(\mathcal{M},\tau_\mathcal{M})$.
\end{proof}

The reason for proving the above, is that all the type III Orlicz spaces with upper fundamental index less that 1, live inside $(L^\infty+L^1)(\mathcal{M},\tau)$. We first proceed to define the fundamental indices of an Orlicz space. These were introduced by Zippin \cite{Zip}.

\begin{definition} Let $L^\Psi(\mathfrak{M})$ be an Orlicz space, and let $\varphi_\psi$ be the fundamental function of the space 
$L^\Psi(0,\infty)$. (Here we consider the Luxemburg, norm but the case of the Orlicz norm is completely analogous.) Let $M_\psi(t)=\sup_{s>0}\frac{\varphi_\Psi(st)}{\varphi_\Psi(s)}$. (If we use the fundamental function of the space $L_\Psi(0,\infty)$ equipped with the Orlicz norm, we will write $\widetilde{M}_\psi(t)$ for this function.) Then the lower and upper fundamental indices of $L^\Psi(\mathfrak{M})$ are defined to be $$\underline{\beta}_{L^\Psi}=\sup_{0<t\leq 1}\frac{\log M_\psi(s)}{\log s}\quad \mbox{and}\quad \overline{\beta}_{L^\Psi}=\inf_{1<t}\frac{\log M_\psi(s)}{\log s}$$respectively. 
\end{definition}

The following result is a variant of \cite[Theorem 3.13]{L}, where the Boyd indices were used to prove a similar theorem. 

\begin{proposition}Let $L^\Psi(\mathfrak{M})$ be an Orlicz space with upper fundamental index strictly less than 1. Then $L^\psi(\mathfrak{M})\subset (L^\infty + L^1)(\mathcal{M},\tau_\mathcal{M})$ where $\mathcal{M} = \mathfrak{M}\rtimes_{\sigma^{\nu}}{\mathbb R}$. Moreover the canonical topology on $L^\psi(\mathfrak{M})$ then agrees with the subspace topology inherited from $(L^\infty + L^1)(\mathcal{M},\tau_\mathcal{M})$. 
\end{proposition}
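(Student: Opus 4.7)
The plan is to follow the architecture of \cite[Theorem 3.13]{L}, systematically replacing the Boyd indices with Zippin's fundamental indices throughout. Recall the standard identification
\begin{equation*}
(L^\infty + L^1)(\mathcal{M},\tau_{\mathcal{M}}) = \Bigl\{ a \in \widetilde{\mathcal{M}} : \textstyle\int_0^1 \mu_t(a)\,dt < \infty \Bigr\},
\end{equation*}
with norm $\|a\|_{L^\infty + L^1} = \int_0^1 \mu_t(a)\,dt$. Since every $a \in L^\Psi(\mathfrak{M})$ already lives in $\widetilde{\mathcal{M}}$ by the Haagerup-style definition, the task reduces to controlling $\int_0^1 \mu_t(a)\,dt$ by the Orlicz norm, which simultaneously yields the set-theoretic inclusion and continuity of the embedding.

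First I would unpack the hypothesis $\overline{\beta}_{L^\Psi} < 1$ as a concrete dilation estimate: there exist $\alpha \in (0,1)$ and $C > 0$ with $\varphi_\Psi(st) \leq C t^\alpha \varphi_\Psi(s)$ for all $s > 0$ and $t \geq 1$, and hence $\varphi_\Psi(u)/u \to 0$ as $u \to \infty$ while the complementary fundamental function $\varphi_{\Psi^*}(u)\asymp u/\varphi_\Psi(u)$ is integrable at the origin. Via a dyadic decomposition of $(0,1)$ (or equivalently a Lorentz-space H\"older inequality of $L^{\Psi,\infty}$--$L^{\Psi^*,1}$ type), this translates into the commutative inclusion $L^\Psi(0,\infty) \hookrightarrow (L^\infty + L^1)(0,\infty)$ with an explicit embedding constant depending only on $\alpha$ and $C$.

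Next I would transfer the commutative estimate to the noncommutative side. The Haagerup construction represents $a \in L^\Psi(\mathfrak{M})$ as a $\tau$-measurable operator on $\mathcal{M}$, and the commutative ingredient applies to $t \mapsto \mu_t(a)$ via Borel functional calculus on the spectral resolution of $|a|$, yielding
\begin{equation*}
\int_0^1 \mu_t(a)\,dt \leq C' \|a\|_\Psi.
\end{equation*}
For the equivalence of topologies I would invoke a closed-graph/open-mapping argument: $L^\Psi(\mathfrak{M})$ is a Banach space under its canonical norm, both the canonical and the subspace topologies are stronger than the Hausdorff topology of convergence in measure on $\widetilde{\mathcal{M}}$, and both are complete; a sequence converging in one norm while bounded in the other must converge in measure to the same limit, forcing $L^\Psi(\mathfrak{M})$ to be closed inside $(L^\infty+L^1)(\mathcal{M},\tau_{\mathcal{M}})$ and the two norms to be equivalent.

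The principal obstacle I foresee is the extraction of a usable bound on $\mu_t(a)$ from a fundamental-index hypothesis. Boyd indices encode the dilation behaviour of the full norm on decreasing rearrangements and so plug directly into estimates of this form, which is why the proof of \cite[Theorem 3.13]{L} proceeds smoothly; the Zippin indices by contrast only control the fundamental function $\varphi_\Psi$, so a Calder\'on-type interpolation inequality (comparing the Marcinkiewicz $(\Psi,\infty)$-norm to the Orlicz norm) is needed to bridge the gap between fundamental-function control and pointwise control on $\mu_t$. Ensuring that this Calder\'on step is compatible both with Haagerup's sandwich definition of $L^\Psi(\mathfrak{M})$ via $\tilde{\varphi}_{\Psi^*}(h)^{1/2}$ and with the $B$-module structure exploited in Proposition \ref{criterion} is where the genuine technical work lies; the surrounding arguments are largely formal.
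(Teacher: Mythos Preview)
Your proposal conflates two different constructions of noncommutative Orlicz spaces. The space $L^\Psi(\mathfrak{M})$ in this proposition is the \emph{Haagerup-type} space from \cite{L}, defined by the sandwich condition with $\widetilde{\varphi}_{\Psi^*}(h)^{1/2}$; its elements are singled out by a scaling law under the dual action $\theta_s$, and the canonical norm is $\mu_1(a)$ (compare the Haagerup $L^p$ case, where $\mu_t(a)=t^{-1/p}\|a\|_p$, so $\mu_1(a)=\|a\|_p$). For such $a$ one has $\mu(a)\notin L^\Psi(0,\infty)$ in general: already for $L^p$ the rearrangement $t\mapsto t^{-1/p}$ fails to lie in $L^p(0,\infty)$. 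Hence your ``commutative ingredient'' $L^\Psi(0,\infty)\hookrightarrow (L^1+L^\infty)(0,\infty)$ applied to $t\mapsto\mu_t(a)$ cannot produce the bound $\int_0^1\mu_t(a)\,dt\leq C'\|a\|_\Psi$, because the right-hand side, read as $\|\mu(a)\|_{L^\Psi(0,\infty)}$, is infinite. The paragraph where you ``transfer the commutative estimate'' therefore does not go through, and the closing caveat about compatibility with the sandwich definition is not a technicality but the whole difficulty.

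The paper's argument avoids any appeal to $\mu(a)\in L^\Psi(0,\infty)$. It works instead with the structural identities from \cite[Theorem 3.10]{L}: for $a\in L^\Psi(\mathfrak{M})$ one has $\mu_1(a)=\sup_{0<t\leq 1}t\mu_t(a)$ and $t\mu_t(a)=\mu_1(d_{s_t}^{1/2}a\,d_{s_t}^{1/2})$, where $d_{s_t}=\widetilde{\varphi}_{\Psi^*}(e^{-s_t}h)^{-1}\widetilde{\varphi}_{\Psi^*}(h)$ and $e^{s_t}=t$. The hypothesis $\overline{\beta}_{L^\Psi}<1$ is used, via the complementary relation $\underline{\beta}_{\Psi^*}=1-\overline{\beta}_{\Psi}>0$, to bound the operator norm $\|d_{s_t}\|\leq t^{\epsilon}$ for small $t$ and some $\epsilon>0$; this yields $t\mu_t(a)\leq t^{\epsilon}\mu_1(a)$, whence $\int_0^1\mu_t(a)\,dt=\int_0^1\frac{t\mu_t(a)}{t}\,dt$ is finite and controlled by $\mu_1(a)$. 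The reverse inequality is immediate from monotonicity of $\mu$. Thus the fundamental index enters through the \emph{operator} $d_{s_t}$ built from $h$ and $\widetilde{\varphi}_{\Psi^*}$, not through a rearrangement inequality on $(0,\infty)$; this is the missing idea in your plan.
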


\begin{proof} It is clear from exercise 14 of \cite[Chapter 3]{BS} (see also Corollary 8.15, page 275 in \cite{BS}) that $0<\underline{\beta}_{\Psi^*} =1-\overline{\beta}_\Psi= \sup_{0<t\leq 1} \frac{\log \widetilde{M}_{\psi^*(t)}}{\log t} = \lim_{t\to 0^+}\frac{\log \widetilde{M}_{\psi^*(t)}}{\log t}$. So given any $n\in \Nn$, there must exist $1\geq t_0>0$ such that $$\frac{\log(\widetilde{M}_{\psi^*(t)})}{\log t}\geq \frac{n-1}{n}\underline{\beta}_{\Psi^*}$$for all $0<t\leq t_0$. This can be shown to imply the fact that $$\widetilde{M}_{\psi^*(t)}\leq t^{\frac{n-1}{n}\underline{\beta}_{\Psi^*}}.$$ 

It is an exercise to see that  $$\frac{\widetilde{\varphi}_{\Psi^*(s)}}{\widetilde{\varphi}_{\Psi^*(s/t)}}\leq \sup_{r>0}\frac{\widetilde{\varphi}_{\Psi^*(rt)}}{\widetilde{\varphi}_{\Psi^*(r)}}=\widetilde{M}_{\psi^*(t)}.$$On selecting $s_t<0$ so that 
$e^{s_t}=t$, this in turn ensures that $$d_{s_t}=\widetilde{\varphi}_{\Psi^*}(e^{-s_t}h)^{-1}\widetilde{\varphi}_{\Psi^*}(h)\leq t^{\frac{n-1}{n}\underline{\beta}_{\Psi^*}}.$$Let $x\in L^\Psi(\mathfrak{M})$ be given. We remind the reader that then $\mu_1(x)=\sup_{0<t\leq 1}t\mu_t(x)$ and $t\mu_t(x)=\mu_1(d_{s_t}^{1/2}xd_{s_t}^{1/2})$ for any $0<t\leq 1$ (see \cite[Theorem 3.10]{L} and its proof). For 
$0<t\leq t_0$ we then clearly have that $t\mu_t(x)\leq\|d_{s_t}\|\mu_1(a)\leq t^{\frac{n-1}{n}\overline{\beta}_\Psi}$. Finally use the 
fact that $t\to \mu_t(x)$ is decreasing, to see that for $n>1$
\begin{eqnarray*}
\mu_1(x)&\leq& \int_0^1\mu_r(x)\,dr=\int_0^1\frac{r\mu_r(x)}{r}\,dr\\
&\leq& \left[\int_0^{t_0}\left(\frac{1}{r}\right)^{1-\frac{n-1}{n}\underline{\beta}_{\Psi^*}}\,dt + \int_{t_0}^1\frac{1}{r}\,dr\right]\sup_{0<t\leq 1}t\mu_t(x).
\end{eqnarray*}
Hence $\mu_1(x)=\sup_{0<t\leq 1}t\mu_t(x)$ is equivalent to the canonical norm on $(L^\infty + L^1)(\mathcal{M},\tau_\mathcal{M})$, namely 
$\int_0^1\mu_r(x)\,dr$. The claim follows
\end{proof}

What we still need is an alternative criterion for identifying the elements of $(L^\infty+L^1)(\mathcal{M},\tau)$ that belong to some $L^\Psi(\mathfrak{M})$

\begin{lemma}\label{Orlicz} Let $\mathcal{M} = \mathfrak{M}\rtimes_{\sigma^{\varphi}}{\mathbb R}$, let $\theta_s$ be the dual action of $\mathbb{R}$ on $\mathcal{M}$, and let $h=\frac{d\widehat{\varphi}}{d\tau}$. Given some Young's function $\Psi$ and some $s\leq 0$, we will write $d_s$ for the bounded map $\widetilde{\varphi}_{\Psi^*}(e^{-s}h)^{-1}\widetilde{\varphi}_{\Psi^*}(h)$. We then have that $a\in \widetilde{\mathcal{M}}$ belongs to $L^\Psi(\mathfrak{M})$ if and only if for any $s\leq 0$ we have that $\theta_s(a)=e^{-s}d_s^{1/2}ad_s^{1/2}$.
\end{lemma}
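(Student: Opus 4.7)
The plan is to reduce both implications to Haagerup's characterization of $L^1(\mathfrak{M})$ as the set of $b\in \widetilde{\mathcal{M}}$ satisfying $\theta_s(b)=e^{-s}b$ for every $s\in \mathbb{R}$. The bridging object will be the formal product $b := \widetilde{\varphi}_{\Psi^*}(h)^{1/2}\,a\,\widetilde{\varphi}_{\Psi^*}(h)^{1/2}$, regarded as an element of $\widetilde{\mathcal{M}}$ via the fact that the latter is a topological $*$-algebra in which affiliated operators combine in a well-behaved way. Two features of $\theta_s$ do all the real work: $\theta_s$ fixes $\pi(\mathfrak{M})$ pointwise, so that $\theta_s(e)=e$ and $\theta_s(f)=f$ for any projections $e,f\in \mathfrak{n}_\nu$; and $\theta_s(h)=e^{-s}h$, so that by the Borel functional calculus $\theta_s(\widetilde{\varphi}_{\Psi^*}(h)^{1/2})=\widetilde{\varphi}_{\Psi^*}(e^{-s}h)^{1/2}$.

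For the forward direction, suppose $a\in L^\Psi(\mathfrak{M})$. For each pair of projections $e,f\in \mathfrak{n}_\nu$ the compression $c_{e,f}:=[e\widetilde{\varphi}_{\Psi^*}(h)^{1/2}]\,a\,[\widetilde{\varphi}_{\Psi^*}(h)^{1/2}f]$ lies in $L^1(\mathfrak{M})$, so $\theta_s(c_{e,f})=e^{-s}c_{e,f}$. Computing $\theta_s(c_{e,f})$ the other way by distributing $\theta_s$ across the product using the two properties above and then equating the two expressions, I obtain
\[
e\,\widetilde{\varphi}_{\Psi^*}(e^{-s}h)^{1/2}\,\theta_s(a)\,\widetilde{\varphi}_{\Psi^*}(e^{-s}h)^{1/2}\,f \;=\; e^{-s}\,e\,\widetilde{\varphi}_{\Psi^*}(h)^{1/2}\,a\,\widetilde{\varphi}_{\Psi^*}(h)^{1/2}\,f.
\]
I would then let $e,f$ run over an increasing net of $\mathfrak{n}_\nu$-projections with strong supremum $\I$, invoke continuity of multiplication in the measure topology of $\widetilde{\mathcal{M}}$ to drop $e$ and $f$, and finally multiply on either side by the bounded operator $\widetilde{\varphi}_{\Psi^*}(e^{-s}h)^{-1/2}$. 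Rearranging yields exactly $\theta_s(a)=e^{-s}d_s^{1/2}\,a\,d_s^{1/2}$.

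For the converse, starting from the given identity I would apply $\theta_s$ to $b$, use the functional-calculus formula on the outer factors, and substitute the hypothesis in the middle. The factors $\widetilde{\varphi}_{\Psi^*}(e^{-s}h)^{\pm 1/2}$ arising from $d_s^{1/2}$ telescope against those produced by $\theta_s$, and the right-hand side collapses to $e^{-s}b$. Hence $b\in L^1(\mathfrak{M})$ by Haagerup's characterization, so $ebf\in L^1(\mathfrak{M})$ for all bounded projections $e,f$, and in particular for all projections in $\mathfrak{n}_\nu$. This is precisely the defining condition for $a\in L^\Psi(\mathfrak{M})$.

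The main obstacle I expect is the cancellation step in the forward direction. Turning a family of compressed identities into a single uncompressed identity in $\widetilde{\mathcal{M}}$ requires both the strong-operator density of $\mathfrak{n}_\nu$-projections (guaranteed by semifiniteness of $\nu$) and the measure-topology continuity of left/right multiplication by the affiliated operators $\widetilde{\varphi}_{\Psi^*}(e^{-s}h)^{\pm 1/2}$, needed to confirm that the two sides of the sandwiched identity really extend to the same element of $\widetilde{\mathcal{M}}$. A secondary, more routine point is to confirm that $\widetilde{\varphi}_{\Psi^*}(h)^{1/2}\,a\,\widetilde{\varphi}_{\Psi^*}(h)^{1/2}$ and $d_s^{1/2}\,a\,d_s^{1/2}$ really are well-defined elements of $\widetilde{\mathcal{M}}$; the latter is immediate from the stated boundedness of $d_s$, while the former uses that $\widetilde{\mathcal{M}}$ is closed under strong products of affiliated operators.
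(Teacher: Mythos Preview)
Your overall strategy of reducing both directions to Haagerup's $L^1$-characterisation via the product $b=\widetilde{\varphi}_{\Psi^*}(h)^{1/2}\,a\,\widetilde{\varphi}_{\Psi^*}(h)^{1/2}$ is natural, but the forward direction has a genuine gap at the ``drop $e$ and $f$'' step. Every projection $e\in\pi(\mathfrak{M})$ is fixed by the dual action $\theta_s$, and since $\tau\circ\theta_s=e^{-s}\tau$ this forces $\tau(e)\in\{0,\infty\}$; thus $\tau(\I-e_\alpha)=\infty$ for every $e_\alpha\neq\I$. Consequently a net of $\mathfrak{n}_\nu$-projections increasing strongly to $\I$ does \emph{not} converge to $\I$ in the topology of convergence in measure on $\widetilde{\mathcal{M}}$ (indeed $\mu_t(\I-e_\alpha)=1$ for every $t>0$), so the joint measure-continuity of multiplication cannot be invoked to pass from the compressed identities to an uncompressed one. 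The paper does not attempt this step at all: it simply cites the forward implication from \cite[Theorem 3.10]{L}, where the argument is organised differently.

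For the converse your route differs from the paper's. Rather than form the global object $b$, the paper works throughout with the compressions $[f\widetilde{\varphi}_{\Psi^*}(h)]\,a\,[\widetilde{\varphi}_{\Psi^*}(h)e]$, first establishing the operator identity $d_s^{1/2}\theta_s(\widetilde{\varphi}_{\Psi^*}(h)e)=\widetilde{\varphi}_{\Psi^*}(h)e$ via a closure argument using the spectral cut-offs $\chi_{[0,n]}(h)$, and then showing directly that each compression obeys the $L^1$-scaling law. Your global approach is cleaner in spirit, but two points need attention. First, the assertion that ``$\widetilde{\mathcal{M}}$ is closed under strong products of affiliated operators'' is not correct as stated; $\widetilde{\mathcal{M}}$ is an algebra only under products of \emph{$\tau$-measurable} operators, so you must separately verify $\widetilde{\varphi}_{\Psi^*}(h)^{1/2}\in\widetilde{\mathcal{M}}$ (this does hold, using $\tau(\chi_{(\lambda,\infty)}(h))=c/\lambda$ together with monotonicity of $\widetilde{\varphi}_{\Psi^*}$, but it is not automatic). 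Second, the hypothesis of the lemma only gives $\theta_s(b)=e^{-s}b$ for $s\leq 0$, whereas Haagerup's characterisation of $L^1(\mathfrak{M})$ requires this for all $s\in\mathbb{R}$; the paper isolates exactly this extension from $s\leq 0$ to all $s$ as a separate step before concluding.
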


\begin{proof} The ``only if'' part was proven in the proof of \cite[3.10]{L}. For the converse assume that the stated condition regarding the action of $\theta_s$ holds for some $a\in \widetilde{\mathcal{M}}$. Note that the action of the $\theta$'s extends to operators affiliated to $\mathcal{M}$. So for any projection $e\in \mathfrak{M}$ of finite weight, we know that $e.\widetilde{\varphi}_{\Psi^*}(h)$ is closable with $\tau_\mathcal{M}$-dense domain. It is easy to conclude that
\begin{eqnarray*}
\theta_s(e[\widetilde{\varphi}_{\Psi^*}(h)\chi_{[0,n]}(h)])&=&\theta_s(e)\theta_s(\widetilde{\varphi}_{\Psi^*}(h)\chi_{[0,n]}(h)\\
&=& e\widetilde{\varphi}_{\Psi^*}(e^{-s}h)\chi_{[0,n]}(e^{-s}h)\\
&=& e\widetilde{\varphi}_{\Psi^*}(e^{-s}h)\chi_{[0,e^sn]}(h).
\end{eqnarray*}
But then the operators $\theta_s([e\widetilde{\varphi}_{\Psi^*}(h)])$ and $[e.\widetilde{\varphi}_{\Psi^*}(e^{-s}h)]$ must agree on the dense subspace $\cup_{\lambda<\infty}\chi_{[0,\lambda](h)}(\mathfrak{H})$. Hence $[e.\theta_s(\widetilde{\varphi}_{\Psi^*}(h)]=[e.\widetilde{\varphi}_{\Psi^*}(e^{-s}h)]$ \cite[Lemma 2.1]{Gol}. By duality $\theta_s(\widetilde{\varphi}_{\Psi^*}(h)e)=\widetilde{\varphi}_{\Psi^*}(e^{-s}h)e$.  Hence $d_s^{1/2}\theta_s(\widetilde{\varphi}_{\Psi^*}(h)e)=\widetilde{\varphi}_{\Psi^*}(h)e$. By duality we then also have that $\theta_s([e\widetilde{\varphi}_{\Psi^*}(h)])d_s^{1/2}=[e\widetilde{\varphi}_{\Psi^*}(h)]$. Consequently for any two projections $e,f\in \mathfrak{M}$ of finite weight, we have that 
\begin{eqnarray*}
\theta_s([f\widetilde{\varphi}_{\Psi^*}(h)]a(\widetilde{\varphi}_{\Psi^*}(h)e))&=&\theta_s([f\widetilde{\varphi}_{\Psi^*}(h)])\theta_s(a)\theta_s(\widetilde{\varphi}_{\Psi^*}(h)e)\\
&=&e^{-s}\theta_s([f\widetilde{\varphi}_{\Psi^*}(h)])d_s^{1/2}ad_s^{1/2}\theta_s(\widetilde{\varphi}_{\Psi^*}(h)e) \\
&=&e^{-s}[f\widetilde{\varphi}_{\Psi^*}(h)]a(\widetilde{\varphi}_{\Psi^*}(h)e).
\end{eqnarray*}
for all $s\leq 0$. It is now an exercise to see that if for some $a_0\in \widetilde{\mathcal{M}}$ we have that $\theta_s(a_0)=e^{-s}a_0$ for all $s\leq0$, then $\theta_s(a_0)=e^{-s}a_0$ for all $s\in \mathbb{R}$. So we have that $\theta_s([f\widetilde{\varphi}_{\Psi^*}(h)]a(\widetilde{\varphi}_{\Psi^*}(h)e))=e^{-s}[f\widetilde{\varphi}_{\Psi^*}(h)]a(\widetilde{\varphi}_{\Psi^*}(h)e)$ for all $s\in \mathbb{R}$. Hence by definition $a\in L^\Psi(\mathfrak{M})$ (see \cite{L}).
\end{proof}

\begin{theorem}\label{orlthm} Let $\mathcal{M} = \mathfrak{M}\rtimes_{\sigma^{\nu}}{\mathbb R}$, and let $\Psi$ be a Young's function. Let $\widetilde{T}$ be the map induced by $T$ on $(L^\infty+L^1)(\mathcal{M},\tau)$. If $\overline{\beta}_{L^\Psi}<1$, then $\widetilde{T}$ restricts to a bounded map on $L^\Psi(\mathfrak{M})$.
\end{theorem}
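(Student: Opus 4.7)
The plan is to exploit Lemma \ref{Orlicz}, which provides an intrinsic characterization of $L^\Psi(\mathfrak{M})$ inside $\widetilde{\cM}$ in terms of the dual action $\theta_s$ of $\mathbb{R}$ on $\cM$, together with Proposition \ref{criterion} and the topological embedding $L^\Psi(\mathfrak{M})\hookrightarrow (L^\infty+L^1)(\cM,\tau)$ granted by the previous proposition under the hypothesis $\overline{\beta}_{L^\Psi}<1$. If $\widetilde{T}$ commutes with $\theta_s$ and is $B$-bilinear, then for $a\in L^\Psi(\mathfrak{M})$ and $s\leq 0$ one obtains
\begin{equation*}
\theta_s(\widetilde{T}(a))=\widetilde{T}(\theta_s(a))=\widetilde{T}(e^{-s}d_s^{1/2}ad_s^{1/2})=e^{-s}d_s^{1/2}\widetilde{T}(a)d_s^{1/2},
\end{equation*}
since $d_s=\widetilde{\varphi}_{\Psi^*}(e^{-s}h)^{-1}\widetilde{\varphi}_{\Psi^*}(h)$ is a bounded element of the abelian von Neumann algebra $B$ (boundedness of $d_s$ for $s\leq 0$ having been established in the course of the previous proposition). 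Applying Lemma \ref{Orlicz} in the reverse direction then places $\widetilde{T}(a)$ in $L^\Psi(\mathfrak{M})$.

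The crux is therefore to establish the commutation $\widetilde{T}\circ\theta_s=\theta_s\circ\widetilde{T}$. I would start on the $\sigma$-weakly dense $^*$-subalgebra of $\cM$ generated by the elements $\lambda(t)\pi(x)$. Using $(v_sf)(r)=e^{-isr}f(r)$ and the formula $\theta_s(a)=(\jed\otimes v_s)a(\jed\otimes v_s^*)$, a direct calculation shows that $\theta_s(\pi(x))=\pi(x)$ while $\theta_s(\lambda(t))=e^{-ist}\lambda(t)$, so $\theta_s(\lambda(t)\pi(x))=e^{-ist}\lambda(t)\pi(x)$. The defining relation $\widetilde{T}(\lambda(t)\pi(x))=\lambda(t)\pi(T(x))$ then yields
\begin{equation*}
\widetilde{T}(\theta_s(\lambda(t)\pi(x)))=e^{-ist}\lambda(t)\pi(T(x))=\theta_s(\widetilde{T}(\lambda(t)\pi(x)))
\end{equation*}
on generators. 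This identity extends to all of $\cM$ by the normality of both $\widetilde{T}$ and $\theta_s$, and then to $(L^\infty+L^1)(\cM,\tau)$ by the norm density of $\cM\cap L^1(\cM,\tau)$ together with the continuity of both maps in the relevant norms (noting that $\theta_s$ scales $\tau$ by $e^{-s}$ and hence extends boundedly to the $L^p$-spaces and their sums).

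The conclusion of the theorem is then immediate: $\widetilde{T}$ maps $L^\Psi(\mathfrak{M})$ into itself, and since the previous proposition guarantees that the canonical Orlicz topology on $L^\Psi(\mathfrak{M})$ coincides with the subspace topology inherited from $(L^\infty+L^1)(\cM,\tau)$, boundedness of $\widetilde{T}$ on the restricted domain follows from its boundedness on the ambient space. The main technical obstacle I foresee lies precisely in the extension step: verifying that the commutation with $\theta_s$ survives passage from the dense subalgebra to the non-reflexive space $(L^\infty+L^1)(\cM,\tau)$, and that the intermediate equalities involving products of bounded elements of $B$ with an only $\tau$-measurable operator $a$ are interpreted consistently within $\widetilde{\cM}$, so that the $B$-bilinearity supplied by Proposition \ref{criterion} can be applied on both sides as claimed.
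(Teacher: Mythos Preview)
Your proposal is correct and follows essentially the same route as the paper: both arguments combine the topological embedding $L^\Psi(\mathfrak{M})\hookrightarrow (L^\infty+L^1)(\cM,\tau)$ from the preceding proposition, the $B$-bilinearity of $\widetilde{T}$ from Proposition~\ref{criterion}, and the characterization of $L^\Psi(\mathfrak{M})$ in Lemma~\ref{Orlicz}. The paper's proof is extremely terse---it simply lists the three ingredients and says ``apply the Lemma''---whereas you make explicit the commutation $\widetilde{T}\circ\theta_s=\theta_s\circ\widetilde{T}$ (verified on generators $\lambda(t)\pi(x)$ and extended by normality and density), a step the paper leaves implicit but which is genuinely needed to close the loop: without it one cannot pass from $\widetilde{T}(d_s^{1/2}ad_s^{1/2})=d_s^{1/2}\widetilde{T}(a)d_s^{1/2}$ to $\theta_s(\widetilde{T}(a))=e^{-s}d_s^{1/2}\widetilde{T}(a)d_s^{1/2}$, since the implementing unitaries $\jed\otimes v_s$ of $\theta_s$ do \emph{not} lie in $B$.
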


\begin{proof} If $\overline{\beta}_{L^\Psi}<1$, then $L^\Psi(\mathfrak{M})$ lives inside $(L^\infty+L^1)(\mathcal{M},\tau)$, and also gets its topology from $(L^\infty+L^1)(\mathcal{M},\tau)$. So all one needs to do to prove this, is to note  
\begin{itemize}
\item that if $a \in (L^\infty+L^1)(\mathcal{M},\tau)$, then also $d_t^{1/2}ad_t^{1/2}\in (L^\infty+L^1)(\mathcal{M},\tau)$,
\item then that $\widetilde{T}(d_t^{1/2}ad_t^{1/2})=d_t^{1/2}\widetilde{T}(a)d_t^{1/2}$ by Proposition \ref{criterion}, 
\item and then simply apply the Lemma.
\end{itemize}
\end{proof}

\begin{example} The upper fundamental index of the space $L^{\cosh -1}(0,\infty)$ is $\frac{1}{2}$. Isomorphic Orlicz spaces share the same indices. So it is sufficient to prove this for a space isomorphic to $L^{\cosh -1}(0,\infty)$. We show how to construct such space before proving the claim. It is easy to see that the graphs of $e^t$ and $\frac{e^2}{4}t^2$ are tangent at $t=2$. This fact ensures that
$$\Psi_e(t)=\left\{\begin{array}{lll} \tfrac{e^2}{4}t^2 & \mbox{if} & 0\leq t \leq 2\\ e^{t} & \mbox{if} & 2 < t \end{array}\right.$$is a Young's function. Using Maclaurin series it is easy to see that $\lim_{t\to 0+}\frac{\cosh(t)-1}{\Psi_e(t)}=\frac{2}{e^2}$. Since we also have that $\lim_{t\to \infty}\frac{\cosh(t)-1}{\Psi_e(t)}=\lim_{t\to \infty}\frac{e^t+e^{-t}-2}{2e^t}=\frac{1}{2}$, it is clear that $\Psi_e\approx \cosh-1$. (To see this note that the limit formulae ensure that we may find $0< \alpha<\beta<\infty$ so that $\frac{1}{e^2} < \frac{\cosh(t)-1}{\Psi_e(t)} < \frac{3}{e^2}$ on $[0,\alpha]$, and $\frac{1}{4} < \frac{\cosh(t)-1}{\Psi_e(t)} < \frac{3}{4}$ on $[\beta,\infty)$. Since the function $\frac{\cosh(t)-1}{\Psi_e(t)}$ has a both a minimum and maximum on the interval $[\alpha,\beta]$, a combination of these facts ensures that we can find positive constants $0<m<M<\infty$ so that 
$m \Psi_e(t) < \cosh(t)-1 < M\Psi_e(t)$ for all $t\in [0,\infty)$.) This is clearly enough to ensure that $L^{\Psi_e}(0,\infty)\equiv L^{\cosh-1}(0,\infty)$. 

It remains to compute the fundamental indices of $L^{\Psi_e}(0,\infty)$. We will use the formulas in Remark 2.3 of \cite{Zip} to compute these indices. We will assume that $L^{\Psi_e}(0,\infty)$ is equipped with the Luxemburg norm. Since 
$$\Psi_e^{-1}(t)=\left\{\begin{array}{lll} \tfrac{2}{e}t^{1/2} & \mbox{if} & 0\leq t \leq e^2\\ \log(t) & \mbox{if} & e^2 < t \end{array}\right.,$$it now follows from \cite[4.8.17]{BS} that the fundamental function of $L^{\Psi_e}(0,\infty)$ is given by 
$$\varphi_e(t)=\left\{\begin{array}{lll} \tfrac{e}{2}t^{1/2} & \mbox{if} &  t \geq e^{-2}\\ \tfrac{1}{-\log(t)} & \mbox{if} & t < e^{-2} \end{array}\right.$$
We proceed to compute the function $M_{\Psi_e}(s)=\sup_{t>0}\frac{\varphi_e(st)}{\varphi_e(t)}$. In computing this function, we first consider the case where $0<s\leq 1$. Since $\varphi_e$ is increasing, we then have that
$$\frac{\varphi_e(st)}{\varphi_e(t)}\leq \frac{\varphi_e(t)}{\varphi_e(t)}=1$$ for any $t>0$. Since we also have that
$$\lim_{t\to 0}\frac{\varphi_e(st)}{\varphi_e(t)}=\lim_{t\to 0} \frac{\log(t)}{\log(st)}=\lim_{t\to 0} \frac{\log(t)}{\log(s)+\log(t)}=1,$$it is clear that $M_{\Psi_e}(s)=1$ in this case.

Now let $s$ be given with $s > 1$. We then have that 

$$\frac{\varphi_e(st)}{\varphi_e(t)} =\left\{\begin{array}{lll} s^{1/2} & \mbox{if} &  t > \tfrac{1}{e^2}\\ -\tfrac{e}{2}s^{1/2}t^{1/2}\log(t) & \mbox{if} & \tfrac{1}{e^2} > t > \tfrac{1}{se^2}\\ \tfrac{\log(t)}{\log(st)} & \mbox{if} & t < \tfrac{1}{se^2}  \end{array}\right. $$

It is not too difficult to see that the function $t\to-\tfrac{e}{2}s^{1/2}t^{1/2}\log(t)$ has a maximum of $s^{1/2}$ at $t=e^{-2}$ on the interval $(0,1)$. So for $t\in(\tfrac{1}{se^2}, \tfrac{1}{e^2})$, the supremum of the above quotient is $s^{1/2}$. Finally consider the function $$t\to\tfrac{\log(t)}{\log(st)}=\tfrac{\log(t)}{\log(s)+\log(t)} =1-\tfrac{\log(s)}{\log(s)+\log(t)}=1-\tfrac{\log(s)}{\log(st)}.$$It is easy to see that
$$\frac{d}{dt}(1-\tfrac{\log(s)}{\log(s)+\log(t)})=\frac{\log(s)}{t(\log(st))^2} >0$$ on $t\in(0,\tfrac{1}{se^2})$. Hence on $(0,\tfrac{1}{se^2}]$, $t\to \tfrac{\log(t)}{\log(st)}$ attains a maximum of $1+\tfrac{1}{2}\log(s)$ at $t=\tfrac{1}{se^2}$. Using the fact that $1+\log(t)\leq t$, it is now easy to see that $1+\tfrac{1}{2}\log(s)=1+\log(s^{1/2})\leq s^{1/2}$. Putting all these facts together leads to the conclusion that $M_{\Psi_e}(s)=\sup_{t>0}\frac{\varphi_e(st)}{\varphi_e(t)}=s^{1/2}$ in this case. We therefore have that $$\overline{\beta}_{\Psi_e}=\lim_{s\to\infty}\frac{\log M_{\Psi_e}(s)}{\log s}=\lim_{s\to\infty}\frac{\log s^{1/2}}{\log s}=\frac{1}{2}$$as claimed. Similarly $$\underline{\beta}_{\Psi_e}=\lim_{s\to 0+}\frac{\log M_{\Psi_e}(s)}{\log s}=\lim_{s\to 0+}\frac{\log 1}{\log s}=0.$$
\end{example}

\begin{corollary}
If $T$ is a CP map on $\mathfrak{M}$ satisfying DBC, then $\widetilde{T}$ canonically induces an action on $L^{\cosh -1}(\mathfrak{M})$.
\end{corollary}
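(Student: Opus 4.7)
The plan is to verify that the hypotheses of Theorem \ref{orlthm} are met by $T$ and $\widetilde{T}$, and then to specialize to the Young's function $\Psi(x) = \cosh(x) - 1$. First I would translate from the DBC setup of Definition \ref{DBC}, which is phrased in terms of a faithful normal state $\omega$, into the weight-based framework of the later sections by simply taking $\nu = \omega$. Equation (\ref{1.10}) then gives $\omega \circ T = \omega$, so in particular $\nu \circ T \leq \nu$, while the commutation identity $T \circ \sigma_t = \sigma_t \circ T$ established just after (\ref{for1.12}) supplies the other compatibility hypothesis. Complete positivity delivers the complete boundedness required by Theorem \ref{Thm1}, so that theorem produces a completely positive normal extension $\widetilde{T}$ of $T$ to $\cM = \mathfrak{M} \rtimes_{\sigma^\nu} \mathbb{R}$, and Corollary \ref{Ttrace} then yields $\tau \circ \widetilde{T} \leq \tau$.

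Once these conditions are in force, the proposition based on \cite[Theorem 5.1]{HJX} extends $\widetilde{T}$ canonically to a bounded map on $L^1(\cM, \tau)$ and hence on $(L^\infty + L^1)(\cM, \tau)$. At this stage everything is in place to apply Theorem \ref{orlthm}: the Example immediately preceding the Corollary computed that $\overline{\beta}_{L^{\cosh - 1}} = \tfrac{1}{2} < 1$, which is precisely the hypothesis of that theorem for $\Psi = \cosh - 1$. Theorem \ref{orlthm} then yields that $\widetilde{T}$ restricts to a bounded map on $L^{\cosh - 1}(\mathfrak{M})$, which is exactly the conclusion of the Corollary.

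At this stage there is essentially no real obstacle; all the genuine work has been carried out in Theorem \ref{orlthm} and in the fundamental-index computation of the Example. The only point that needs a sentence of care is the passage from the DBC formulation, phrased for a faithful normal state, to the weight-based formulation of the later machinery, but this is immediate because a faithful normal state is a special case of a faithful normal semifinite weight and because DBC already encodes both the invariance of the reference functional under $T$ and the commutation of $T$ with the modular automorphism group.
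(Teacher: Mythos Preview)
Your proposal is correct and follows exactly the route the paper intends: the Corollary is an immediate specialization of Theorem \ref{orlthm} to $\Psi=\cosh-1$, using the preceding Example's computation $\overline{\beta}_{L^{\cosh-1}}=\tfrac{1}{2}<1$, with the DBC hypotheses supplying the state-invariance and modular-commutation needed upstream. The paper in fact states the Corollary without proof for precisely this reason; your only addition is the careful (and entirely correct) bookkeeping that DBC with respect to the faithful normal state $\omega$ places $T$ within the weight-based framework of \S\ref{1.4s} by taking $\nu=\omega$.
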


We pause to consider the case where $T$ is a positive normal map on $\mathfrak{M}$ which may not be CP. For the sake of simplicity assume that $\mathfrak{M}$ is $\sigma$-finite, with $\omega$ a faithful normal state. Here we will follow the exposition of \cite{GL},\cite {GL2}, \cite{Kos}, and \cite{terp}. There is an operator $h \in L^1(\mathfrak{M})$ (where $\widetilde{\omega}$ is the dual weight of $\omega$ and $ h \equiv h_{\omega} \equiv \frac{d\widetilde{\omega}}{d\tau}$) associated with the state $\omega$. One may then use this operator to define embeddings
of $\mathfrak{M}$ into $L^p(\mathfrak{M})$ as follows:
\begin{equation}
\iota_p : \mathfrak{M} \ni a \mapsto h_{\omega}^{\frac{1}{2p}} a h_{\omega}^{\frac{1}{2p}}.
\end{equation}
Further, let $T$ be a Markov map on $\mathfrak{M}$. Define $T^{(p)}$ by
\begin{equation}
T^{(p)}(\iota_p(a)) = \iota_p(Ta),
\end{equation}
for $a \in \mathfrak{M}$. Note that $\iota_p(\mathfrak{M})$ is dense in $L^p(\mathfrak{M})$ (see Lemma 1.6 in \cite{GL}). Therefore, $T^{(p)}$ is densely defined. (Similar conclusions hold in the general non-$\sigma$-finite case where we have a weight $\nu$ instead of a state $\omega$. But in that case the embedding $\iota_p$ should be defined on the subalgebra $\mathrm{span}\{y^*x: x, y \in \mathfrak{M}; \nu(x^*x)<\infty, \nu(y^*y)<\infty\}$ rather than the full algebra.)

Moreover by Theorem 5.1 in \cite{HJX} (see also \cite[Proposition 2.2]{GL}), whenever $\nu\circ T\leq \gamma \nu$ for some $\gamma>0$, the map $T^{(p)}$ will extend to a positive bounded map on $L^p(\mathfrak{M})$.

Of course the question now arises as to how the maps $T^{(p)}$ compare to the extension of $\tT$ to $L^p(\mathfrak{M})$ by means of the above process, and ultimately also how the work of Goldstein and Lindsay, and Haagerup, Junge and Xu, compare to ours. This relationship is clarified by the following corollary to Theorem \ref{orlthm}:

\begin{corollary}\label{orlcor} In the case $\Psi(t)=t^p$ ($p>1$), the maps induced by $\widetilde{T}$ on $L^p(\mathfrak{M})$, are exactly the maps $T^{(p)}$ constructed in \cite[Theorem 5.1]{HJX}.
\end{corollary}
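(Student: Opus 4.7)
The plan is to reduce the equality of $\widetilde{T}|_{L^p(\mathfrak{M})}$ and $T^{(p)}$ to a norm-dense subspace on which both maps admit an explicit description, and then identify them there via the module property of Proposition \ref{criterion}.

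First I would note that both maps are bounded on $L^p(\mathfrak{M})$: $\widetilde{T}|_{L^p(\mathfrak{M})}$ by Theorem \ref{orlthm} (since the upper fundamental index of $L^p$ equals $1/p<1$), and $T^{(p)}$ by \cite[Theorem 5.1]{HJX}. Because $\iota_p$ has norm-dense range in $L^p(\mathfrak{M})$, it suffices to verify the identity $\widetilde{T}(\iota_p(a))=\iota_p(T(a))=T^{(p)}(\iota_p(a))$ for $a$ in the natural domain of $\iota_p$.

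The key input, recorded in the proof of Lemma \ref{Orlicz}, is that the von Neumann subalgebra $B\subseteq\mathcal{M}$ generated by the $\lambda(s)$'s coincides with the commutative von Neumann algebra generated by $h=d\widehat{\nu}/d\tau$. Hence for each $n\in\Nn$ the bounded operator $h_n := h^{1/(2p)}\chi_{[1/n,n]}(h)$ lies in $B$. Proposition \ref{criterion} supplies the left module identity $\widetilde{T}(bc)=b\widetilde{T}(c)$ for $b\in B$; taking adjoints, and using that $\widetilde{T}$ is $*$-preserving as the extension of a positive map, yields the right module identity as well. Together with $\widetilde{T}(\pi(a))=\pi(T(a))$, this gives
$$\widetilde{T}(h_n\, a\, h_n) \;=\; h_n\,\widetilde{T}(a)\,h_n \;=\; h_n\, T(a)\, h_n.$$

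To conclude, I would pass to the limit $n\to\infty$. Since $\chi_{[1/n,n]}(h)\uparrow\I$ strongly, a routine spectral approximation shows that $h_n a h_n \to \iota_p(a)$ and $h_n T(a) h_n \to \iota_p(T(a))$ in $(L^\infty+L^1)(\mathcal{M},\tau)$. The continuity of $\widetilde{T}$ on $(L^\infty+L^1)(\mathcal{M},\tau)$, combined with the continuous inclusion $L^p(\mathfrak{M})\hookrightarrow(L^\infty+L^1)(\mathcal{M},\tau)$ established in the proposition preceding Theorem \ref{orlthm}, then delivers the desired equality. The principal technical obstacle is that $h^{1/(2p)}$ is only affiliated with, not a bounded element of, $B$; the bounded truncations $h_n\in B$ together with the continuity of $\widetilde{T}$ on $(L^\infty+L^1)(\mathcal{M},\tau)$ are precisely the device that circumvents this difficulty.
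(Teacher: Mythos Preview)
Your proof is correct and follows essentially the same strategy as the paper's: both arguments truncate $h^{1/(2p)}$ via spectral projections to obtain bounded elements of $B$, invoke the module identity of Proposition \ref{criterion} to commute $\widetilde{T}$ past these truncations, and then pass to a limit. The only cosmetic difference is that the paper sandwiches $\widetilde{T}(\iota_p(a))$ from the outside by $\chi_{[0,n]}(h)$ (so the limiting step is simply $\chi_n\to\I$ applied to a fixed identity in $\widetilde{\mathcal{M}}$), whereas you approximate $\iota_p(a)$ from the inside by $h_n a h_n$ and invoke the continuity of $\widetilde{T}$ on $(L^\infty+L^1)(\mathcal{M},\tau)$.
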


\begin{proof} Firstly note that by the preceding theorem, each $L^p(\mathfrak{M})$ lives inside $(L^1+L^\infty)(\mathcal{M})$. Hence let $1<p<\infty$ and  $a\in \mathfrak{M}$ be given. Then $h^{1/(2p)}ah^{1/(2p)}\in L^p(\mathfrak{M})\subset (L^1+L^\infty)(\mathcal{M})$. On applying Proposition \ref{criterion} to $\widetilde{T}$, it follows that $\chi_{[0,n]}(h)\widetilde{T}(h^{1/(2p)}ah^{1/(2p)})\chi_{[0,n]}(h)=\widetilde{T}([h^{1/(2p)}\chi_{[0,n]}(h)]a[h^{1/(2p)}\chi_{[0,n]}(h)])$ for each $n\in \mathbb{N}$. But since $[h^{1/(2p)}\chi_{[0,n]}(h)]\in B$, it follows from the definition of $\widetilde{T}$ that \begin{eqnarray*}
\widetilde{T}([h^{1/(2p)}\chi_{[0,n]}(h)]a[h^{1/(2p)}\chi_{[0,n]}(h)]) &=& [h^{1/(2p)}\chi_{[0,n]}(h)]T(a)[h^{1/(2p)}\chi_{[0,n]}(h)]\\
&=& \chi_{[0,n]}(h)[[h^{1/(2p)}T(a)h^{1/(2p)}]\chi_{[0,n]}(h).
\end{eqnarray*}Hence for each $n$, we have 
\begin{eqnarray*}\chi_{[0,n]}(h)\widetilde{T}(h^{1/(2p)}ah^{1/(2p)})\chi_{[0,n]}(h) &=& \chi_{[0,n]}(h)[h^{1/(2p)}T(a)h^{1/(2p)}]\chi_{[0,n]}(h)\\
&=& \chi_{[0,n]}(h)T^{(p)}(h^{1/(2p)}ah^{1/(2p)})\chi_{[0,n]}(h).
\end{eqnarray*}
This is enough to prove the claim.
\end{proof}

One may view the extension of the map $T^{(p)}$ to $L^p(\mathfrak{M})$, as an extension of the Schr\"odinger type of evolution for the non-commutative $L^p(\mathfrak{M})$-space. To justify this assertion, let us consider $T^{(2)}$ in detail. Assume DBC for $T$ and denote by $\mathfrak{M}_0$ the $^*$-algebra of entire analytic elements for the modular automorphism $\sigma$  (we refer the reader to Section 2.5.3 in \cite{BR} for a description of $\mathfrak{M}_0$) and note that for $a \in \mathfrak{M}_0$, 
\begin{equation}
T^{(2)}(\iota_2(a)) = T(h^{\frac{1}{4}} a h^{\frac{1}{4}})
= h^{\frac{1}{4}} T(a) h^{-\frac{1}{4}} h^{\frac{1}{2}}
= \sigma_{- \frac{i}{4}}(T(a)) h^{\frac{1}{2}} = T(\sigma_{-\frac{i}{4}}(a)) h^{\frac{1}{2}},
\end{equation}
where we have used the commutativity of $T$ with the modular automorphism (ensured by the DBC assumption), and the fact that  $T^{(2)}$ is well defined map on $L^2(\mathfrak{M})$-space.
On the other hand, a Markov map $T$ on $\mathfrak{M}$ satisfying DBC leads to the following definition (cf. (\ref{for1.12})):
\begin{equation}
\widehat{T} x \Omega = T(x) \Omega.
\end{equation}
where $\widehat{T}$ leaves the natural cone $\cP$ (globally) invariant.
But, because of the universality of the standard form (see \cite{Araki}, \cite{Haage}, \cite{terp}) and hence also of the natural cone, we may identify
$$(\mathfrak{M},\ L^2(\mathfrak{M}),\ ^*, \ L^2_+(\mathfrak{M})) $$
with
$$(\mathfrak{M},\ \cH, \ J, \ \cP).$$
The unique implementing vector for $\omega$ (fixed, faithful state) in the natural cone $L^2_+(\mathfrak{M})$, is exactly $h^{\frac{1}{2}}$ while $\Omega$ plays the same role in $\cP$. As $\widehat{T}$ can be called a quantum map in the Schr\"odinger picture, it should now be obvious, that the maps $T^{(p)}$ considered in this subsection can be regarded as Schr\"odinger type quantum maps.

\section{Dirichlet forms and quantum maps on noncommutative spaces}

For the reader's convenience we recall that in late $50$'s of the last century, A. Beurling and J. Deny \cite{BD1}, \cite{BD2} 
characterized Markovian semigroups on Hilbert spaces of square integrable functions, in terms of Dirichlet forms associated to their infinitesimal generators. The reader is advised to consult Ma and R\"ockner's book \cite{MAR} for more details and a complete bibliography.

Subsequently, the theory of non-commutative Dirichlet forms was started with seminal papers of L. Gross \cite{gros} and S. Albeverio, R. H{\o}gh-Krohn \cite{AHK}. Then, the theory was elaborated by others; see Fabio Cipriani's thoroughgoing review \cite{Cip} for a recent account of the theory and a comprehensive bibliography.

In particular, it is worth pointing out, see \cite{MZ1}, \cite{MZ2}, that techniques based on non-commutative $L^p$-spaces were essential for the quantization of Markov-Feller processes.
Although the obtained framework seems to be the well adapted to the quantization of Markov-Feller processes, there emerged  problems 
regarding how one may correctly describe the stability of a system, a ``return to equilibrium'', and to provide a characterization of a ground state.

 One may conjecture that to answer the above questions one should generalize quantum $L^p$-techniques to those based on quantum Orlicz spaces. To support this conjecture, we emphasize that a basic ingredient in the examination of the above problems, are the log Sobolev inequalities (see \cite{Zeg} and references given there). The important point to note here is that Gross in his seminal paper \cite{gross} has already recognized the relevance of Orlicz spaces for the analysis of log Sobolev inequalities. In particular log Sobolev inequalities can be viewed as Poincar\'e-type inequalities in the Orlicz space $L\log(L+1)$; see \cite{Bob}. This is precisely what one may expect, as the Orlicz space $L\log(L+1)$ (up to an equivalent renorming dual to $L^{\cosh - 1}$) describes states with a nicely defined entropy function; for details see \cite{ML1}. Furthermore, it is worth pointing out that Gross \cite{gross} had already observed the crucial role of Dirichlet forms in the definition and examination of log Sobolev inequalities (see also \cite{Zeg}).
 
 Having such a strong motivation, this section will be devoted to a study of quantum maps on the Orlicz space of regular observables, which arise from maps on $L^2(\mathfrak{M})$-space or from Dirichlet forms.

As was assumed in the bulk of this paper (with the exception of \S \ref{1.4s}), $\mathfrak{M}$ will be a $\sigma$-finite von Neumann algebra acting on a Hilbert space $\cH$, $\cM$ the cross-product $\mathfrak{M} \rtimes_{\sigma} \Rn$, etc.

The basic Hilbert space, which will be used throughout this section, is $\cH$.
However, it is important to remember that one can identify (cf section \ref{1.4s})
\begin{equation}
 (\ \mathfrak{M}, \ L^2(\mathfrak{M}), \ ^*, \ L^2_+(\mathfrak{M}), \  h_{\omega}^{\frac{1}{2}} \ )
\end{equation}
with
\begin{equation}
 ( \ \mathfrak{M}, \ \cH, \  J, \ \cP, \ \Omega \ ), 
\end{equation}
where we emphasize that $J$ stands for modular conjugation!

We will need (see \cite{Glaz}) 
\begin{definition}
Let $J_0: \cH \to \cH$ be a conjugation. 
A linear operator $A$ with a domain of definition $D(A)$ dense in a complex Hilbert space $\cH$ is said to be $J_0$-self-adjoint if 
$$J_0 \ A \ J_0 = A^*.$$
\end{definition}

From now on we make the assumption that dynamical maps satisfy DBC and $J_0$ is the conjugation induced by the reversing operation $\Theta$, i.e. $J_0 x \Omega = \Theta(x) \Omega$ , $x \in \mathfrak{M}$; cf Section \ref{DBCo}.
It is important to note here that \textit{the above conjugation $J_0$, is in general different from the modular conjugation $J$}. To see the difference it is enough to note that for the modular conjugation $J$ one has $J \cP = \cP$ while DBC needs the conjugation $J_0$ with the property $J_0 \mathfrak{M}_+\Omega \subseteq \rm{closure}\{\mathfrak{M}_+ \Omega\}$. In other words, $J$ is associated with the natural cone $\cP$
whilst $J_0$ is more related to the cone $V_0$ (in Araki's terminology; cf \cite{Araki}), where $V_0 = \rm{closure}\{ \mathfrak{M}_+\Omega\}$.

From now on, $\{ \widehat{T}^0_t: t\geq 0 \}$ denotes a strongly continuous, $J_0$-self-adjoint semigroup on a Hilbert space $\cH$ in standard form. Let $(L, D(L))$ denote its $J_0$-self-adjoint infinitesimal generator and $(\cE, \cF)$ the associated closed quadratic form, i.e.
$$ \cE: \cH \to \Cn, \quad \rm{so \ that} \quad \cF = \{ \xi \in \cH: |\cE[\xi]| < + \infty \},$$
and

$$\cE[\xi] = (\xi, L \xi).$$  
The quadratic form $\cE$ is said to be $J$-real if
\begin{equation}
\label{6.1}
\cE[J \xi] = \cE[\xi], \quad \xi \in \cH.
\end{equation}

The important point to note here is that we have at our disposal two conjugations: $J$ and $J_0$. $J_0$ as used in the definition of $J_0$-self-adjointness, arises from choosing an arbitrary but fixed 
dynamical map $T: \mathfrak{M} \to \mathfrak{M}$ which satisfies DBC. The second map $J$, stems from the standard form of the von Neumann algebra $\mathfrak{M}$ and its faithful normal state $\omega$.
We need both, and both will be used. 

The necessity for the employment of both conjugations, can be seen from the following observations: One cannot do without the modular conjugation $J$, since it describes the canonical real Hilbert structure associated with the standard form of the von
Neumann algebra $\mathfrak{M}$. However $J_0$ is also needed since it gives the reversibility of the Hamiltonian part of dynamics whilst $J$, due to the fact $J\Delta J = \Delta^{-1}$ ($\Delta$ the modular operator) is not even able to implement the reversibility of the modular automorphism group. A description of reversibility of $\sigma_t(\cdot) = \Delta^{it} \cdot \Delta^{-it}$ is of paramount importance in applications to quantum statistical mechanics, as the equilibrium dynamics is usually described by a modular group.

\subsection{Dirichlet forms and Markovian maps}
\label{s6.1}
The mentioned (classical) Beurling-Deny characterization of Markovian semigroups 
relies on the following conditions:
\begin{enumerate}
\item $\cH_{\Rn}$ is assumed to be a real Hilbert space. $\cH_{\Rn}$ can be taken to be 
\begin{equation}
\cH^J = \{ \xi \in \cH; J \xi = \xi \}.
\end{equation}
However, we note that for a full analysis of a semigroup $T_t$, it seems to be standard procedure to make use of the complexification of the real Hilbert space (see pp 23-25 in \cite{MAR}).
\item 
\label{hoho}$\cE$ is real valued, i.e. $\cE: \cH_{\Rn} \to (\infty, \infty]$.
\item
Quadratic forms which do not increase their values under the map $u \mapsto u\wedge1$, i.e.
\begin{equation}
\label{1.51}
\cE[u\wedge1] \leq \cE[u].
\end{equation}

Here, $u\wedge v$ is defined to be the projection of the vector $u$ (from the real Hilbert space) onto the cone $v + \cP$; for details see \cite{Cip}, Section 1.12 in \cite{edwards} and/or \cite{IM}.

\end{enumerate}

\begin{remark}
\label{6.2}
Condition \ref{hoho} seems to be a very strong one and needs some elaboration. To this end, we recall that in Hilbert space terms, the general form of an infinitesimal generator $L$ of uniformly continuous quantum dynamical semigroup $\widehat{T}$ is (see \cite{kos}, \cite{lind}, \cite{chris})
\begin{equation}
L = i H - D,
\end{equation}
where $H = H^*$ and $D\geq 0$. 
Thus, the condition  (\ref{hoho}) demands the self-adjointness of $L$ on $\cH$. In particular, $\cE[\xi] = (\xi, (-D) \xi)$, where $\xi$ in the domain of $D$, is the most natural candidate for a Dirichlet form.
\textit{Consequently, the subsequent analysis given in this subsection, is reduced to a study of self-adjoint semigroups only!} It is easy to check that the requirement of selfadjointness of a dynamical map $\widehat{T}$ follows from the assumption that $\widehat{T}$ is $J_0$-real, i.e. $\widehat{T} \cH^{J_0} \subset \cH^{J_0}$, where $\cH^{J_0} = \{ \xi \in \cH; J_0 \xi = \xi \}$.
\end{remark}

In the remainder of this subsection we assume that the considered forms are bounded from below in the sense that, for some constant $c \in \Rn^+$
$$\cE[\xi] \geq c \|\xi\|^2, \quad \xi \in \cF.$$

The following definition is taken from Cipriani's paper (cf Definition 2.51 in \cite{Cip})

\begin{definition}
Let $( \mathfrak{M},\ \cH, \ \cP, J)$ be a standard form of a von Neumann algebra $\mathfrak{M}$ with $\Omega \in \cP$ a cyclic and separating vector.

A $J$-real quadratic form $\cE: \cH \to (-\infty, + \infty]$ is said to be Markovian with respect to $\Omega$ if
\begin{equation}
\cE[\xi \wedge \Omega] \leq \cE[\xi], \quad \forall \xi \in \cH^J.
\end{equation}
\textit{A closed Markovian form will be called a Dirichlet form.}
\end{definition}

Furthermore, we say that a map $\widehat{T}^0 : \cH \to \cH$ is Markovian if $\widehat{T}^0 \cP \subseteq \cP$ and $\widehat{T}^0 \Omega \leq \Omega$. We are now in a position to give
the following characterization of Markovian semigroups in terms of Dirichlet forms. Note that the given characterization is again a slight modification of Theorem 2.52 given in \cite{Cip}.

\begin{theorem}
Let $( \mathfrak{M},\ \cH, \ \cP, J)$ be a standard form of a von Neumann algebra $\mathfrak{M}$ with $\Omega \in \cP$ a cyclic and separating vector and $\{\widehat{T}_t^0; t\geq0 \}$ be $J_0$-real contractive, strongly continuous semigroup on the Hilbert space $\cH$ such that $\widehat{T}^0_t \cP \subseteq \cP$, where the conjugation $J_0$ is such that $J_0\Omega = \Omega$ and $J_0 \mathfrak{M}_+ \subseteq \overline{\mathfrak{M}_+ \Omega}$. Let $\cE: \cH \to [0, + \infty]$ be the associated $J$-real, closed quadratic form. The following properties are equivalent:
\begin{enumerate}
\item $\{\widehat{T}_t^0: t\geq0 \}$ is Markovian with respect to $\Omega$.
\item $\cE$ is a Dirichlet form with respect to $\Omega$.
\end{enumerate}
\end{theorem}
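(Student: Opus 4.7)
The plan is to follow the Hilbert space approach pioneered by Beurling-Deny in the abelian case and extended to the non-commutative setting by Cipriani, in which the proof ultimately reduces to the classical principle that invariance of a closed convex set under a self-adjoint contraction semigroup is equivalent to a contraction property of the associated quadratic form with respect to the projection onto that set (Ouhabaz's theorem).

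First I would fix notation and reduce to genuine self-adjointness: by Remark \ref{6.2}, the combination of $J_0$-self-adjointness and the contractive $J_0$-real hypothesis forces $\widehat{T}^0_t$ to be self-adjoint on $\cH$, so that $\cE[\xi]=(\xi,L\xi)$ with $L$ a negative self-adjoint operator and $\cE$ a closed, positive, $J$-real quadratic form. I then introduce the closed convex set
\begin{equation}
C=(\Omega-\cP)\cap \cH^J,
\end{equation}
whose Hilbert-space projection $P_C$ acts exactly as $P_C\xi=\xi\wedge\Omega$ (this is the very definition of the Araki lattice operation in the sense of \cite{Cip}, \cite{edwards}).

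For $(1)\Rightarrow(2)$, I would verify directly that the two Markov conditions $\widehat T^0_t\cP\subseteq\cP$ and $\widehat T^0_t\Omega\le\Omega$ imply $\widehat T^0_t C\subseteq C$: for $\eta\in C$ one has $\Omega-\eta\in\cP$, hence $\widehat T^0_t(\Omega-\eta)\in\cP$, so $\widehat T^0_t\eta\le\widehat T^0_t\Omega\le\Omega$, and $J$-realness (inherited by $\widehat T^0_t$ from the form being $J$-real and the passage between form and semigroup) keeps us in $\cH^J$. Applying the standard self-adjoint version of Ouhabaz's invariance theorem to $C$ then yields $\cE[P_C\xi]\le\cE[\xi]$, which is exactly the Markovianity of $\cE$ with respect to $\Omega$.

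For $(2)\Rightarrow(1)$, I would use the converse direction of the Ouhabaz theorem: $\cE[\xi\wedge\Omega]\le\cE[\xi]$ for all $\xi\in\cH^J$ implies $\widehat T^0_t C\subseteq C$. Evaluating at $\xi=\Omega$ gives $\widehat T^0_t\Omega\in C$, i.e.\ $\widehat T^0_t\Omega\le\Omega$. To extract $\widehat T^0_t\cP\subseteq\cP$ separately, I would use the homogeneity of the form: replacing $\xi$ by $\xi/\lambda$ shows that $\cE[\xi\wedge(\lambda\Omega)]\le\cE[\xi]$ for every $\lambda>0$; letting $\lambda\downarrow 0$ and invoking lower semicontinuity of $\cE$ together with the fact that $\xi\wedge(\lambda\Omega)\to-P_{\cP}(-\xi)$ in $\cH$, one obtains $\cE[P_{\cP}\xi]\le\cE[\xi]$, and Ouhabaz once more gives $\widehat T^0_t\cP\subseteq\cP$.

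The main obstacle I foresee is the bookkeeping between the two conjugations $J$ and $J_0$: one needs to be sure that the invariance properties pulled from the $J_0$-real semigroup structure are compatible with the $J$-real form and with the natural cone $\cP$ (which is tied to $J$, not $J_0$), and in particular that $\widehat T^0_t$ does leave $\cH^J$ stable on the subspace where the argument is run. The compatibility hypothesis $J_0\Omega=\Omega$ together with $J_0\mathfrak{M}_+\Omega\subseteq\overline{\mathfrak{M}_+\Omega}$ built into the statement is exactly what is needed to align these two structures on the relevant invariant convex sets; verifying this alignment carefully is where the modification of Cipriani's Theorem 2.52 actually happens.
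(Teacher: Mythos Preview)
Your approach is correct and is essentially the same as the paper's: the paper's proof consists of the single sentence ``The proof is practically a repetition of Cipriani's arguments, see \cite{Cip},'' and what you have written is precisely a sketch of Cipriani's Ouhabaz-based argument adapted to the two-conjugation setting. One small redundancy: in the direction $(2)\Rightarrow(1)$ you work to recover $\widehat T^0_t\cP\subseteq\cP$ via the scaling $\xi\mapsto\xi/\lambda$, but this inclusion is already part of the hypotheses of the theorem, so only $\widehat T^0_t\Omega\le\Omega$ needs to be extracted from the invariance of $C$.
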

\begin{proof}
The proof is practically a repetition of Cipriani's arguments, see \cite{Cip}.
\end{proof}
\begin{remark}
\begin{enumerate}
\item It follows immediately that if $T: \mathfrak{M} \to \mathfrak{M}$ satisfies DBC then $\widehat{T}$ defined by $\widehat{T}x \Omega = T(x)\Omega$, $x \in \mathfrak{M}$, fulfills all requirements assumed for $\widehat{T}_0$.
\item Furthermore, if additionally $\widehat{T}$ is $J_0$-real then $\widehat{T}$ commutes with $J$. Consequently, it's infinitesimal generator gives a $J$-real form.
\end{enumerate}
\end{remark}
Thus we have arrived at the conclusion that Dirichlet forms defined on $\cH^J$ lead to self-adjoint ($J_0$-real) Markov semigroups on $\cH^J$. On the other hand (see subsection \ref{DBCo}), Markov maps on $\mathfrak{M}$ satisfying DBC are in one-to-one correspondence with $J_0$-selfadjoint semigroups $\widehat{T}$ on $\cH^J \subset \cH$, preserving (globally) the natural cone $\cP$, leaving the vector $\Omega$ invariant, and commuting with the modular automorphism. All this leads to the following question: \textit{Do these maps have their counterparts on the Orlicz space of regular observables?} This question gains an additional interest if we realize that the discussed Markov semigroups are nicely related to the corresponding semigroup defined on von Neumann algebras, see Section 2.4 in \cite{Cip}.
We will answer this question in the next subsection.

\subsection{Detailed Balance Condition and Orlicz spaces}
The answer to the question posed at the end of the previous subsection naturally falls into three parts.

The first one was already presented in Section \ref{1.4s}. In particular, employing two ways of quantising measurable structures (respectively the DDdP strategy and the strategy based on crossed products) quantum maps on the space of regular observables were defined, with the second of these strategies requiring the original map on the set of all bounded observables to at least be completely bounded.

For the second part we pause to note that the CP assumption can be dropped in certain cases. 
To see this, let a $\sigma$-finite von Neumann  algebra $\mathfrak{M}$ be given in the standard form
$$ ( \ \mathfrak{M}, \ \cH, \  J, \ \cP, \ \Omega \ ). $$
Let $T: \mathfrak{M} \to \mathfrak{M}$  be a Markov map satisfying DBC, so \textit{complete positivity is not assumed!}  
Neither does one need to require any sort of self-adjointness with respect to modular structures of the considered maps. In this context the map $T$ canonically induces an action on each of the $L^p(\mathfrak{M})$-spaces ($1\leq p <\infty$) by means of the maps $T^{(p)}$  discussed in section \ref{1.4s}. So in this context at least an action on $L^p$-spaces survives. As was pointed out in section \ref{1.4s}, in the case where we do have CP maps, our construction reproduces the action of these same maps on $L^p(\mathfrak{M})$-spaces. Hence our construction should be seen as an extension of the action of $T$ to a wider class of spaces in the CP case.

Finally, in the third part we follow the line of reasoning given in subsection \ref{s6.1}.
Let $T: \mathfrak{M} \to \mathfrak{M}$ be of the following Stinespring-Choi-Kraus form
\begin{equation}
T(x) = \sum_{i \in \Lambda} V^*_i x V_i
\end{equation}
where $V_i \in \mathfrak{M}$. If $\Lambda$ is infinite, the series $\sum_i V^*_i a V_i$ converges in the strong operator topology. As there is no loss of generality in assuming  $\sum_{i \in \Lambda}V^*_iV_i \leq \jed$, in the remainder of this subsection we require that this inequality holds.

\begin{remark}
Obviously in terms of the infinitesimal generator of a dynamical semigroup, being a CP map, the map $T$ can be considered as an archetype of dissipations. Therefore it gives an example, on the algebra level, of the part $D$ of the infinitesimal generator of a dynamical process, cf Remark \ref{6.2}.
On the the other hand, it is worth pointing out that the Hamiltonian form leads to an automorphism $\alpha$ on the original algebra $\mathfrak{M}$, i.e. $\alpha : \mathfrak{M} \to \mathfrak{M}$. Assuming that $\alpha$ commutes with the equilibrium dynamics, and so also with the modular automorphism, $\alpha$ has an extension on  $\cM \equiv \mathfrak{M} \rtimes_{\sigma} \Rn$, which is also a $^*$-homomorphism, cf. Theorem 4.1 in \cite{HJX} (loc. cit. as Theorem \ref{Thm1}). Then, the results given in
our previous papers, see Definition 4.2 in \cite{LM} and Proposition 4.7(i) in \cite{Louis}, yield a well defined map on quantum Orlicz spaces. As the full dynamics can be obtained by Trotter's product formula from 
the Hamiltonian and dissipative parts of time evolution, we may restrict ourselves to an analysis of the map $T$.
\end{remark}

Following the arguments given in the proof of Theorem 4.1 in \cite{HJX} it is easy to see that $T$ has an extension on $\cM \equiv \mathfrak{M} \rtimes_{\sigma} \Rn$ given by
\begin{equation}
\tT(\lambda(t)\pi(x)) = \sum_{i \in \Lambda} \widetilde{V_i}^* (\lambda(t) \pi(x) \widetilde{V_i}),
\end{equation}
where $\widetilde{V_i} \equiv V_i \otimes \jed_{L^2(\Rn)}$ and where we have used the notation given in Sections 3 and 4.

\emph{In the remainder of this discussion we will assume that $\Lambda$ is finite.} Suppose that $\Lambda$ can be identified with $N\in \mathbb{N}$. Given any $a \in {\cM}_+$, it then easily follows from \cite[Lemma 2.5]{FK}, that for any $t>0$, we will have that  $$\mu_t(\widetilde{T}(a))=\mu_t(\sum_{i=1}^N \widetilde{V}^*_i a \widetilde{V}_i)\leq (\sum_{i=1}^N\|\widetilde{V}_i\|^2)\mu_t(a)\leq N\mu_t(a).$$

For any two elements $a,b \in \cM$, we therefore have that $|\widetilde{T}(a)-\widetilde{T}(b)|^2\leq \widetilde{T}(|a-b|^2)$. Hence using what we have just shown, it then follows that $$\mu_t(\widetilde{T}(a)-\widetilde{T}(b))= \mu_t(|\widetilde{T}(a)-\widetilde{T}(b)|^2)^{1/2}\leq \mu_t(\widetilde{T}(|a-b|^2))^{1/2}$$ 
$$\leq N\mu_t(|a-b|^2)^{1/2}=N\mu_t(a-b).$$

Therefore in the case where $\Lambda$ is finite, the map $\widetilde{T}$ is in fact continuous on $\cM$ with respect to the topology of convergence in measure (\cite[Lemma 3.1]{FK}). Since $\cM$ is dense in $\widetilde{\cM}$ with respect to this topology, $\widetilde{T}$ then extends not just to $(L^1+L^\infty)(\cM, \tau)$ (as we saw in Theorem \ref{orlthm}), but to all of 
$\widetilde{\cM}$. Moreover in its action on $\widetilde{\cM}$, we have that $\sup_{0<t\leq 1}t\mu_t(\widetilde{T}(a))\leq N\sup_{0<t\leq 1}t\mu_t(a)$ for all $a\in \widetilde{\cM}$. Hence the extension of $\widetilde{T}$ restricts to a continuous map on the quasi-Banach function space $\Lambda_\infty(L^1+L^\infty)(\widetilde{\cM})$. (For details on this space and how it relates to the present study, refer to \cite[Remark 3.14]{L}.) 

What is of particular importance is that this quasi-Banach function space canonically contains all the quantum Orlicz spaces $L^\Phi(\mathfrak{M})$, and not just those satisfying the criteria in Theorem \ref{orlthm} (see \cite[Remark 3.14]{L}). On repeating the argument given in Theorem \ref{orlthm}, but using the space $\Lambda_\infty(L^1+L^\infty)(\widetilde{\cM})$ instead of $(L^1+L^\infty)(\cM,\tau)$, we may now show that this extension of $\widetilde{T}$ restricts to a bounded map on \emph{each} quantum Orlicz space $L^\Phi(\mathfrak{M})$.

To sum up this Section: a variety of dynamical maps, including those which are related to Dirichlet forms, canonically induce an action on Orlicz spaces which are distinguished by our approach to Statistical Physics, like for example $L^{\cosh -1}$ spaces.

\section{Conclusions and final remarks}
 By focusing attention on observables, algebras and states, we proposed a new formalism for Statistical Mechanics, both classical and quantum, see \cite{ML1}, \cite{ML2}.
It is based on two distinguished Orlicz spaces $L^{\cosh-1}$ and $L\log(L+1)$, and proves to be a canonical extension of the traditional formalism for elementary quantum mechanics; for details see \cite{ML2}.

However in general, physical systems are dynamical, i.e. they evolve in time. So a state (respectively an observable) can exhibit changes brought about by the passage of time.
\textit{With this aim in mind, we have in this paper defined and examined quantum maps which are able to describe dynamical processes within this same formalism.}

On the quantum level, we have presented two different ways of describing quantum maps, although the one is only directly available for semifinite algebras. Aside from that aspect, the starting point of both approaches is the same. Since an algebra of observables for a typical system in statistical mechanics or quantum field theory is usually given by a type III factor, a formalism suited to such algebras like the crossed product approach, should be used to get a suitable framework for non-commutative integration theory, 

The first approach we presented used the DDdP strategy, i.e. in the case where $\mathfrak{M}$ is semifinite, the selection of appropriate quantum measurable elements in $\widetilde{\mathfrak{M}}$ by means of generalized singular values. The second approach, which is based on Haagerup's crossed product strategy, employs certain relations based on the concept of fundamental functions, to identify those elements of $\widetilde{\cM}$ that may validly be regarded as elements of $L^\Phi(\mathfrak{M})$. A very important fact to take note of is that in the case where $\mathfrak{M}$ is a semifinite algebra equipped with a faithful normal semifinite trace, the two approaches yield the same structures! (See Section 2 of \cite{L}.) We emphasize that on ``the physical level'' it is difficult to say which method is better adapted to the theory of dynamical systems. As we showed, both approaches are well suited to the study of time evolution. However in view of the sheer power and elegance of the DDdP approach, this approach should be preferred in the tracial case. If on the other hand $\mathfrak{M}$ is a type III algebra, the breadth and scope of the crossed product approach is essential to success.

The second point which needs some elaboration, is that related to the role of DBC and CP. The starting point of the interpolation scheme demands a well defined map on a superspace of the interpolating spaces.
To get such a map, one firstly needs to extend the original dynamical map over the corresponding crossed product. To this end one at the very least needs the commutation of the considered map with the modular automorphism and also complete positivity. The former requirement is ensured by the DBC. It is worth pointing out that the so called property of KMS-symmetry/selfadjointness does not guarantee this property; see \cite{cip1}, \cite{Cip}. As far as the DBC property is concerned, we note that the ``physical'' meaning of DBC is that the considered dynamical systems are midway between equilibrium and full non-equilibrium time evolution (cf \cite{Maj1}). In other words, this condition excludes, to some extent, excessively irregular dynamical maps -- a restriction which is quite acceptable.

It was indicated at the beginning of the preceding section that the analysis of Markov maps has a rather long history; see also \cite{AF}. However, it was not until the appearance of Haagerup's pioneering paper \cite{Haage1} on the general theory of $L^p$-spaces (general means that type III factors were included) when it was recognized that this theory provides a very convenient framework for a study of both Markov maps and interpolation, see \cite{Kos}, \cite{MTerp}, \cite{GL}, \cite{Skalski}. But, the important point to note here is that this general theory of $L^p$-spaces ensures the existence of well defined canonical embeddings of the original ($\sigma$-finite) algebra $\mathfrak{M}$ into $L^p(\mathfrak{M})$-spaces.

Let a $\sigma$-finite von Neumann  algebra $\mathfrak{ M}$ be given in the standard form
$$ ( \ \mathfrak{M}, \ \cH, \  J, \ \cP, \ \Omega \ ). $$

Let $T: \mathfrak{M} \to \mathfrak{M}$  be a Markov map satisfying DBC. So neither complete positivity, nor any sort of selfadjointness with respect to modular structures is assumed. In this case, see subsection \ref{DBCo} and/or \cite{Maj1}, there is a one-to-one correspondence between the given map $T$ and the map $\widehat{T}: \cH \to \cH$ defined by
\begin{equation}
\widehat{T}x \Omega = T(x) \Omega, \quad x \in \mathfrak{M}.
\end{equation}
But as $$ (\ \mathfrak{M}, \ L^2(\mathfrak{M}), \ ^*, \ L^2_+(\mathfrak{M}), \  h_{\omega}^{\frac{1}{2}} \ )$$
is also a standard form for $\mathfrak{M}$, and standard forms for a fixed von Neumann algebra are universal, we infer that there exists (in one-to-one correspondence) an $L^2(\mathfrak{M})$-counterpart of $\widehat{T}$,
which can be identified with the map constructed on ${L^2(\mathfrak{M})}$ from the extension of $\tT$ to $(L^1+L^\infty)(\cM, \tau)$. This not only implies that $T$ can be recovered from the information encoded in the map induced on ${L^2(\mathfrak{M})}$, but also that when in standard form, the underlying Hilbert space is somehow intrinsic to the process of constructing quantum Orlicz spaces. 

We pause to observe that the complete positivity of $T$ can be described at the Hilbert space level. This point was made at the start of section 5 of \cite{LMM}, but for maps $T$ satisfying Detailed Balance II (see \cite{MS}) not DBC. For the sake of the reader, we show how that discussion may be adapted to the present context. (In the discussion in \cite{LMM}, the roles of the vectors $\widehat{T}(\Omega)$ and $\widehat{T}^*(\Omega)$ weren't described clearly enough. The discussion below alleviates that problem to some extent.) 

We let $\mathcal{P}_n$ denote the natural cone for $(\mathfrak{M}
\otimes B(\mathbb{C}^n), \omega \otimes \omega_0 )$ where $\omega_0$
is a faithful state on $B(\mathbb{C}^n)$ (one can for example take $\omega_0$ to be
$\frac{1}{n} Tr$). For
the same algebra, $\Delta_n = \Delta \otimes \Delta_0$ and $J_n=J\otimes J_0$
are respectively the modular operator and modular conjugation
for $M_n(\mathfrak{M})$, defined in terms of the vector
$\Omega_n = \Omega \otimes \Omega_0$ (ie. in
terms of the state $\omega \otimes \omega_0$).

We know from \cite[Theorem 4.12]{Maj1}, that $T$ is unital positive map on $\mathfrak{M}$ satisfying DBC, if and only if the prescription $\widehat{T}(a\Omega)=T(a)\Omega$, ($a\in \mathfrak{M}$), induces a bounded map $\widehat{T}$ on $\mathcal{H}$ which 
\begin{itemize}
\item commutes strongly with $\Delta$;
\item is $J$ selfadjoint in the sense that $J\widehat{T}J=\widehat{T}^*$;
\item satisfies $\widehat{T}(\Omega)=\Omega$;
\item and for which we have that $\widehat{T}(\mathcal{P}) \subset \mathcal{P}$.
\end{itemize}

Now suppose that $T$ and $\widehat{T}$ are given as above. Then for any $n\in \mathbb{N}$, $T$ induces a map $T\otimes \I$ on $M_n(\mathfrak{M})=\mathfrak{M}\otimes M_n(\mathbb{C})$ which canonically corresponds to the related bounded map $\widehat{T}\otimes \I$ on $\mathcal{H}^{(n)}$. From the fact that $\widehat{T}$ satisfies the first three bullets above, it is a simple matter to verify that similarly $\widehat{T}\otimes \I$ satisfies similar conditions with respect to $\Delta_n = \Delta \otimes \Delta_0$, $J_n=J\otimes J_0$, and $\Omega_n = \Omega \otimes \Omega_0$. Hence at the Hilbert space level, the one property which will determine if $\widehat{T}\otimes \I$ corresponds to a unital positive map $T\otimes \I$ on $M_n(\mathfrak{M})$ (satisfying DBC), is its action on the natural cone $\mathcal{P}_n = \overline{{\Delta}^{1/4}_n\{[a_{ij}]\Omega_n :
[a_{ij}] \in M_n(\mathfrak{M})^+\}}$ of $\mathcal{H}^{(n)}$. So given a map $T$ on $\mathfrak{M}$, we can say that $T$ is a unital completely positive map if and only if the prescription $\widehat{T}(a\Omega)=T(a)\Omega$, ($a\in \mathfrak{M}$), induces a bounded map $\widehat{T}$ on $\mathcal{H}$ which 
\begin{itemize}
\item commutes strongly with $\Delta$;
\item is $J$ selfadjoint in the sense that $J\widehat{T}J=\widehat{T}^*$;
\item satisfies $\widehat{T}(\Omega)=\Omega$;
\item and for which we have that $(\widehat{T}\otimes\I)(\mathcal{P}_n) \subset \mathcal{P}_n$ for all $n\in \mathbb{N}$.
\end{itemize}

In the discussion preceding Corollary \ref{orlcor}, we noted that with $T$ as above, one may construct densely defined maps $T^{(p)}$ on each $L^p(\mathfrak{M})$ ($1\leq p <\infty$), and then extend these by continuity to get an action on the given $L^p$ space. Using this approach, Goldstein and Lindsay developed a theory of Markov semigroups and Dirichlet forms specialised to $L^p$-spaces. (See \cite{GL}, \cite{GL2}.) The advantage of their approach is that they do not require the ambient maps to be CP. The disadvantage is that their approach only seems to work for $L^p$-spaces, whereas ours includes all the Orlicz spaces covered by Theorem \ref{orlthm} -- in particular the Orlicz space $L^{\cosh-1}$. In their analysis, Goldstein and Lindsay did also restrict attention to KMS-symmetric quantum maps, whereas no such assumption is made in the present work. Unpublished joint work of Goldstein and Lindsay with Skalski \cite{GLS} suggests that this assumption was one of convenience rather than necessity. We will nevertheless comment on this restriction.

\begin{enumerate}
\item 
As far as KMS-symmetry/selfadjointness is concerned, we firstly note that modular dynamics, the most serious candidate for 
equilibrium dynamics, does not satisfy this condition. To see this 
let $\mathfrak{M}$ be a von Neumann algebra, $\omega$ a faithful normal state on it. A Markov map
$\Phi: \mathfrak{M} \to \mathfrak{M}$ is said to KMS-self-adjoint if
\begin{equation}
\omega(\Phi(x) \sigma_{-\frac{i}{2}}(y)) = \omega(\sigma_{\frac{i}{2}}(x) \Phi(y))
\end{equation}
where $\sigma_t$ stands for the modular dynamics, $x,y$ are analytic elements in $\mathfrak{M}$.
We see at once that $\sigma_t$ is not KMS-self-adjoint.
\item As a second example illustrating difficulties, we consider a uniformly continuous semigroup 
$T_t: \mathfrak{M} \to \mathfrak{M}$, where $\mathfrak{M}$ is a von Neumann algebra acting on a Hilbert space $\cH$. Let $L$ be its infinitesimal generator.
Consider the following symmetric embedding (see \cite{cip1}):
$$i_0: \mathfrak{M} \to \cH$$
$$i_0(x) = \Delta^{\frac{1}{4}} x \Omega, \quad x \in \mathfrak{M},$$
where $\Omega \in \cH$ is such that $\omega(\cdot) = (\Omega, \cdot \Omega).$
Define the operator $H$ on $\cH$ by
\begin{equation}
H \Delta^{\frac{1}{4}}x \Omega = \Delta^{\frac{1}{4}}L(x) \Omega, \quad x \in \mathfrak{M}.
\end{equation}
Then, the operator $H$ associated to $L$ is selfadjoint if and only if certain complicated relations
between dissipative and Hamiltonian parts of $L$ are satisfied, for details see \cite{park}.
These additional relations have no direct ``physical'' meaning and should be interpreted cautiously.
\end{enumerate}

Finally, it is worth pointing out a philosophical difference between the two approaches. In the above approach one defines the maps on $L^p$ from ``inside'' by making use of the embedding of $\mathfrak{M}$ into $L^p(\mathfrak{M})$. By contrast our approach was to ``enlarge'' the operator to a big superspace which is large enough to allow for a certain degree of interpolation. We believe this to be a useful advantage.

\newpage


\begin{thebibliography}{99}

\bibitem{AHK} S. Albeverio, R. H{\o}egh-Krohn, Dirichlet forms and Markov semigroups on $C^*$-algebras, \textit{Comm. Math. Phys.} \textbf{56} 173-187 (1977)


\bibitem{AF} R. Alicki, M. Fannes, \textit{Quantum dynamical systems}, Oxford University Press, 2001.

\bibitem{Araki} H. Araki, Some properties of modular conjugation operator of von Neumann algebras and a non-commutative Radon-Nikodym theorem with a chain rule, \textit{Pacific J. Math.} \textbf{50}, 309-354 (1974)

\bibitem{BS} C. Bennett, R. Sharpley, \textit{Interpolation of operators}, Academic Press,INC. 1988


\bibitem{BD1} A. Beurling, J. Deny, Dirichlet spaces, \textit{Proc. Nat. Acad. Sci.} \textbf{45} 208-215 (1959)

\bibitem{BD2} A. Beurling, J. Deny, Espaces de Dirichlet I. Le cas \'el\'ementaire, \textit{Acta  Mathematica} \textbf{99}, 203-224 (1958)

\bibitem{Bob} S. G. Bobkov, F. G\"otze, Exponential Integrability and Transportation Cost Related to Logarithmic Sobolev Inequalities, \textit{J. Funct. Anal.} \textbf{163}, 1-28 (1999)

\bibitem{BR} O. Bratteli, D. Robinson, \textit{Operator algebras and Quantum Statistical Mechanics I},  Text and Monographs in Physics, Springer Verlag, 1979

\bibitem{Skalski} M. Caspers, A. Skalski, The Haagerup approximation property for von Neumann algebras via quantum Markov semigroups and Dirichlet forms, \textit{Commun. Math. Phys.} \textbf{336} 1637-1664 (2015)

\bibitem{chris} E. Christensen, D. E. Evans, Cohomology of operator algebras and quantum dynamical semigroups,\textit{ J. Lon. Math. Soc.} \textbf{20} 358-368 (1979)

\bibitem{Cip} F. Cipriani, Dirichlet forms on noncommutative spaces, pp. 161- 276; in \textit{Quantum Potential Theory}, \textit{Lecture Notes in Mathematics} \textbf{1954}, Springer, 2008

\bibitem{cip1} F. Cipriani, Dirichlet forms and Markovian semigroups on standard forms of von Neumann algebras, \textit{J. Funct. Anal} \textbf{147} 259-300 (1997)




\bibitem{vD} A. Van Daele, \textit{ Continuous crossed products and type III von Neumann algebras},
Cambridge University Press, 1978


\bibitem{DDdP} PG Dodds, T K.-Y Dodds and B de Pagter, Non-commutative Banach function spaces, \textit{Math Z} \textbf{201}(1989), 583-597.


\bibitem{DDdP2} PG Dodds, T K.-Y Dodds and B de Pagter, Fully symmetric operator spaces, \textit{Integral Equations and Operator Theory} \textbf{15} 942-972 (1992)

\bibitem{edwards} R. E. Edwards, \textit{Functional Analysis}, Holt, Rinehard and Winston, 1965

\bibitem{FK} T. Fack, H. Kosaki, Generalized s-numbers of $\tau$-measurable operators, \textit{Pac. J.Math.} \textbf{123} 269-300 (1986)

\bibitem{FU} F. Fagnola, V. Umanita, Generators of KMS symmetric Markov semigroups on $B(\cH)$ Symmetry and Quantum Detailed Balance, \textit{Commun. math. Phys.} \textbf{298} (2010) 523-547

\bibitem{Glaz}I. M. Glazmann, \textit{Direct methods of qualitative spectral analysis of singular differential operators}, Jeruzalem, 1965

\bibitem{Gol} S Goldstein, Conditional expectation and stochastic integrals in non-commutative $L^p$ spaces, \textit{Math Proc Camb Phil Soc} \textbf{110}(1991), 365--383.

\bibitem{GL} S. Goldstein, J. M. Lindsay, KMS-symmetric Markov semigroups, \textit{Math. Z.} \textbf{219}, 591 -608 (1995).

\bibitem{GL2} S. Goldstein and J. M. Lindsay, Markov semigroups
KMS-symmetric for a weight, \textit{Math Ann} \textbf{313} (1999), 39--67.

\bibitem{GLS} S. Goldstein, J. M. Lindsay, and A. Skalski, Nonsymmetric Dirichlet forms on nontracial von Neumann algebras, preprint.

\bibitem{kos} V. Gorini, A. Kossakowski, E. C. G. Sudershan, Completely positive dynamical semigroups of $N$-level systems, \textit{J. Math. Phys.} \text{17}821-825 (1976)

\bibitem{gros} L. Gross, Hypercontractivity and logarithmic Sobolev inequalities for the Clifford-Dirichlet forms, \textit{Duke Math. J.} \textbf{42} 383-396 (1975)

\bibitem{gross} L. Gross, Logarithmic Sobolev Inequalities, \textit{Amer. J. Math.} \textbf{97} 1061-1083 (1975)

\bibitem{GIS} D. Guido, T. Isola, S. Scarlatti, Non-symmetric Dirichlet forms on semifinite von Neumann algebras, \textit{J. Funct. Analysis} \textbf{135}, 50-75 (1996)

\bibitem{Zeg} A. Guionnet, B. Zegarlinski, Lectures on Logarithmic Sobolev Inequalities, pp. 1-134,  in S\'eminaire de Probabilities XXXVI, Eds. J. Az\'ema, et al.,Lecture Notes in Mathematics, vol. \textbf{1801}, 2003

\bibitem{Haage} U. Haagerup, The standard form of von Neumann algebras, \textit{Math. Scan.} \textbf{37} 271-283 (1975)

\bibitem{Haage1} U. Haagerup, $L^p$-spaces associated with an arbitrary von Neumann algebra, \textit{Colloques Internationaux CNRS}, No. 274, pp. 175-184, \'Editions du CNRS, Paris, 1979

\bibitem{HJX} U. Haagerup, M. Junge, Q. Xu, A reduction method for noncommutative $L^p$-spaces and applications, \textit{TAMS}, \textbf{362}, (2010) 2125-2165

\bibitem{JL} \textit{Handbook of the geometry of Banach spaces}, vol. \textbf{2}, Ed. by W. B. Johnson and J. Lindenstrauss, Elsevier Science, 2003

\bibitem{hiai} F. Hiai, Majorization and stochastic maps in von Neumann algebras, \textit{J. Math. Anal. App.} \textbf{127}, 18-48 (1987)

\bibitem{IM} J. M. Ingram, M. M. Marsh, Projections onto convex cones in Hilbert space, \textit{J. Approx. Theory} \textbf{64}, 343-350 (1991)

\bibitem{KR} R. V. Kadison, J. R. Ringrose, \textit{Fundamentals of the Operator Theory, vol. II. Advanced Theory}, Graduate Studies in Mathematics, vol. 16; American Mathematical Society 1997. 


\bibitem{Kos} H. Kosaki, Applications of complex interpolation method to a von Neumann algebra (Non-commutative $L^p$-spaces) \textit{J. Funct. Anal.} \textbf{56} 29-78 (1984)

\bibitem{L} L. E. Labuschagne, A crossed product approach to Orlicz spaces, \textit{Proc. London Math. Soc.} \textbf{107} (2013) 965-1003

\bibitem{Louis} L. E. Labuschagne, Composition operators on non-commutative $L^p$-spaces, \textit{Expo.Math} \textbf{17} 429-468 (1999)

\bibitem{LMM} L. E. Labuschagne, W. A. Majewski and M. Marciniak, On $k$-decomposability of positive maps, \textit{Expo Math} \textbf{24} (2006), 
103-125.

\bibitem{LM} L. E. Labuschagne, W. A. Majewski, Maps on non-commutative Orlicz spaces, \textit{Illinois J. Math}. {\bf 55}, 1053-1081, (2011)

\bibitem{lind} G. Lindblad, On the generators of quantum dynamical semigroups, \textit{Comm. Math. Phys.} \textbf{48} 119-130 (1976)

\bibitem{LS} G. G. Lorentz, T. Shimogaki, Interpolation theorems for the pairs of spaces $(L^p, L^{\infty})$ and $(L^1, L^q)$, \textit{Trans. Amer. Math. Soc.} \textbf{159}, 207-221 (1971)

\bibitem{MAR} Z. M. Ma, M. R\"ockner, \textit{An introduction to the theory of (non-symmetric) Dirichlet forms,} Springer Verlag, 1992

\bibitem{Maj2} W.A. Majewski, The detailed balance condition in quantum statistical mechanics, \textit{J. Math. Phys.}, \textbf{25}, (1984) 614-616


\bibitem{Maj1} W. A. Majewski, Dynamical semigroups in the algebraic formulation of statistical mechanics, \textit{Fortsch. Phys.} \textbf{32}, (1984) 89

\bibitem{ML1} W. A. Majewski, L.E. Labuschagne, On applications of Orlicz spaces to Statistical Physics, \textit{ Ann. H. Poincare.}, {\bf 15}, 1197-1221, (2014)

\bibitem{ML2} W. A. Majewski, L.E. Labuschagne, Why are Orlicz spaces useful for Statistical Physics? in \textit{ Noncommutative Analysis, Operator Theory and Applications},
Eds: Daniel Alpay, et al, Birkhauser-Basel. 
Series: Linear Operators and Linear Systems Vol. 252,
2016; arXiv: 1502.05845v1 [math-ph]

\bibitem{MS} W.A. Majewski, R. F. Streater, The detailed balance condition and quantum dynamical maps,
\textit{J. Phys. A.: Math. Gen.} \textbf{31} (1998) 7981-7995

\bibitem{MZ1} W. A. Majewski, B. Zegarlinski, Quantum stochastic dynamics I: Spin systems on a lattice, \textit{ Math. Phys. Electronic J.} \textbf{ 1}, Paper 2 (1995)

\bibitem{MZ2} W. A. Majewski, B. Zegarlinski, On quantum stochastic dynamics and noncommutative $L_p$-spaces, 
\textit{ Lett. Math. Phys.} \textbf{ 36}, 337 (1996)

\bibitem{MZ3} W. A. Majewski, B. Zegarlinski, Quantum stochastic dynamics II,
 \textit{ Rev. Math. Phys.} \textbf{ 8}, 689-713 (1996)
 
\bibitem{MZ4} W. A. Majewski, B. Zegarlinski, On quantum stochastic dynamics,
\textit{ Markov Proc. and Rel. Fields}
 \textbf{ 2}, 87-116 (1996)

\bibitem{MO} L. Maligranda, V. I. Ovchinnikov, On interpolation between $L_1 + L_{\infty}$ and $L_1\cap L_{\infty}$, \textit{J. Funct. Analysis} \textbf{107}, 342-351 (1992)

\bibitem{nelson} E. Nelson, Notes on non-commutative integration, \textit{J. Funct. Anal.} \textbf{15} (1974), 103


\bibitem{park} Y. M. Park, Remarks on the structure of Dirichlet forms on standard forms of von Neumann algebras, \textit{Infinite Dim. Anal. Quantum Prob. and Related Topics} \textbf{8} 179-197 (2005)
\bibitem{vP} V. Paulsen, \textit{Completely bounded maps and operator algebras}, Cambridge University Press, 2002

\bibitem{PT} G. K. Pedersen, M. Takesaki, The Radon-Nikodym theorem for the von Neumann algebras, \textit{ Acta Mathematica}, \textbf{130}, 53-87 (1973)

\bibitem{RR} M. M. Rao, Z. D. Ren, {\em Theory of Orlicz spaces}, Dekker, 1991

\bibitem{se} I. E. Segal, A non-commutative extension of abstract integration, \textit{Ann. of Math.} \textbf{57} (1953), 401




\bibitem{Tak} M. Takesaki, \textit{Theory of Operator Algebras}, vol. I, Springer, 1979


\bibitem{terp} M. Terp, {\it $L^p$ spaces associated with von Neumann algebras}. K{\o}benhavs Universitet, Mathematisk Institut, Rapport No 3a (1981)

\bibitem{MTerp} M. Terp, Interpolation spaces between a von Neumann algebra and its predual, \textit{J. Operator Theory} \textbf{8} 327-360 (1982)


\bibitem{yea} F. J. Yeadon, Ergodic theorems for semifinite von Neumann algebras. I, \textit{J. London Math. Soc.} \textbf{16} (1977), 326 - 332.

\bibitem{Zip} M Zippin, Interpolation of operators of weak type between rearrangement invariant Banach Function spaces, \textit{J Funct. Anal} \textbf{7}(1971), 267-284.
 
\end{thebibliography}
\end{document}